\documentclass[12pt,a4paper]{article}
\usepackage{hyperref}
\usepackage{color}
\usepackage{lineno}
\usepackage{graphicx}
\usepackage{authblk}
\usepackage{amsmath,amsthm}
\usepackage{stmaryrd}
\usepackage{amsfonts}
\usepackage{amssymb}
\usepackage{mathrsfs}
\usepackage{subcaption}
\usepackage[top=2.5cm, bottom=2.5cm, left=2cm, right=2cm]{geometry}

\newcommand{\Fp}{\boldsymbol F_{\mathrm p}}

\newcommand{\Lp}{\boldsymbol L_{\mathrm p}}

\newcommand{\Qp}{\boldsymbol Q_{\mathrm p}}

\newtheorem{theorem}{\bf Theorem}[section]
\newtheorem{proposition}{\bf Proposition}[section]

\newtheorem{remark}{\bf Remark}[section]
\newtheorem{definition}{\bf Definition}[section]

\usepackage{lmodern}
\begin{document}
\allowdisplaybreaks

\title{\emph{A posteriori controllability} of non-linear remodelling in linear elastic continua}

\author{
Salvatore Di Stefano\footnote{Dipartimento di Matematica, University ``Aldo Moro'' of Bari - via Edoardo Orabona 4, 70125, Bari (BA), Italy} \hspace{0.15cm} and Marta Zoppello\footnote{Dipartimento di Scienze Matematiche ``Giuseppe Luigi Lagrange", Politecnico di Torino, corso Duca degli Abruzzi, 24, 10129, Torino (TO), Italy}}

\maketitle



\noindent\textbf{Keywords:}{ Remodelling, Geometric Control Theory, Continuum Control Theory}



\begin{abstract}
\noindent We investigate the \emph{a posteriori controllability} of remodelling within the framework of Geometric Control Theory applied to continuum systems. 

\noindent Remodelling is described through an isochoric Bilby--Kröner--Lee decomposition, where the internal structural transformation is represented by a remodelling tensor and its evolution is governed by a stress-driven constitutive law. Assuming infinitesimal deformation and remodelling stretches, while retaining finite rotations of the principal remodelling directions, we derive a nonlinear state-space representation in terms of principal remodelling stretches and orientation variables. This formulation allows the remodelling process to be interpreted as a finite-dimensional nonlinear control system parametrised by the material position. Equilibrium configurations are characterised, and under-actuated controllability problems are studied by means of the tools of Geometric Control Theory. The three-dimensional setting shows that the nonlinear coupling between remodelling stretches and orientations enables complete steering of the internal state through a reduced number of independent controls, while the planar reduction reveals a loss of controllability despite the presence of nonlinearity. \\
The results provide a first systematic investigation of remodelling controllability and further support the development of Continuum Control Theory as a framework for the analysis and design of controlled internal structural transformations in continuous media.
\end{abstract}

\section{Introduction}
Describing how external factors influence the evolution of a mechanical system is a challenging task, as it requires identifying the relevant physical variables and capturing their contribution to the evolution of the internal structure of the medium. This modelling process may be grounded on theoretical assumptions consistent with the underlying physics or inferred from experimental observations and generally involves mechanisms operating across different spatial and temporal scales. Examples include plastic rearrangement in non-living materials \cite{Lubliner2008a,Micunovic2009a,Cermelli2001a}, remodelling in biological tissues \cite{Ambrosi2011a,Ambrosi2019a}, cellular aggregates \cite{Giverso2012a,DiStefano2022c} and cell-matrix interactions \cite{DiStefano2026b}. In all these situations, deformation is accompanied by a structural rearrangement evolving at its own characteristic spatial and temporal scales, with the interaction between these two mechanisms being responsible for the overall mechanical response of the system.

\subsection{\emph{A priori} and \emph{a posteriori} strategies}
External processes may enter a continuum model through two conceptually distinct channels. First, they may be incorporated into an evolution law, thereby determining whether an internal, or structural, transformation is activated or inhibited and regulating the rate at which it proceeds \cite{Micunovic2009a,Maugin1998a}. Second, their action may be represented through generalized external forces that influence the evolution of a continuum system \cite{Cermelli2001a,dicarlo2002a}.

Focusing on the first channel, we distinguish between the \emph{a priori} and \emph{a posteriori} approaches \cite{GrilloMMS2023a,Grillo2023MEMOCSa,Grillo2023MEMOCSb}. In the former, the evolution law is assigned from the outset, possibly on the basis of phenomenological assumptions or experimental observations, whereas, in the latter, it is not prescribed in advance but emerges from the governing dynamics of the medium. This distinction was first developed in growth mechanics, where the evolution law entered the model as a time-dependent non-holonomic constraint on the (tensorial or scalar) parameter describing growth in the \emph{a priori} formulation, whereas no such constraint was prescribed in the \emph{a posteriori} formulation and the growth law emerged as part of the governing dynamics \cite{GrilloMMS2023a,Grillo2023MEMOCSa,Grillo2023MEMOCSb}. Although the terminology originated in the context of volumetric growth, it is entirely independent of mass-growth phenomena and can be extended to any internal degree of freedom describing a specific structural transformation.

Turning to the second channel, the key issue concerns the generalized force associated with external agents. When such a force is prescribed at the constitutive level, possibly depending on the mechanical state, the internal degrees of freedom, and non-mechanical factors acting on the medium, we speak of \emph{a priori controllability} \cite{DiStefano2025a}. In such a case, by specifying the form of the force, controlling the system means controlling the material parameters appearing in it \cite{Ambrosi2025a}. Conversely, when it is instead reconstructed from the state that the system is required to reach, we have the notion of \emph{a posteriori controllability} \cite{DiStefano2025a}. The latter perspective, first introduced in \cite{DiStefano2025a} in the context of volumetric growth, is naturally rooted in Geometric Control Theory, where external actions are interpreted as control functions \cite{Coron2007} and their admissible forms are determined according to controllability arguments based on their ability to steer the \emph{structural} state of a system toward a prescribed one. We also refer to \cite{DiStefano2026a} for an application of the \emph{a posteriori controllability} to adhesive composites. 

\subsection{From \emph{Geometric} to \emph{Continuum Control Theory}}
Geometric Control Theory provides a mathematical framework for investigating whether the evolution of a dynamical system can be steered toward prescribed states through the action of suitably chosen controls \cite{AgrachevBook,Coron2007,LibroJurdjevic}. Rather than focusing exclusively on the solution of a given evolution problem, it addresses the possibility of driving a system between different configurations and characterizes the geometric and analytical structure underlying such processes.  Traditionally, Geometric Control Theory has been successfully applied to finite-dimensional systems arising in mechanics \cite{bloch,CardinGianniottiSpiro2021}, robotics \cite{SansonettoZoppello2020,Wang2020OptimalCO}, locomotion \cite{MMSZ,FPZ} and engineering applications \cite{BlochObstacleAvoidance,DMPSansonetto}. In these settings, the state variables describe kinematic or dynamical quantities, whereas the controls are interpreted as external actions acting on the system. The central question concerns the possibility of reaching desired states, configurations or trajectories through suitable control inputs.

In the present work, we adopt the term \emph{Continuum Control Theory} to denote the modelling framework in which the methods of Geometric Control Theory are applied to investigate and steer the spatial and/or temporal evolution of continuum systems. To this end, and to the best of our knowledge, the present work, together with \cite{DiStefano2025a}, constitutes one of the first attempt to systematically apply Geometric Control Theory to continuum mechanics. For this reason, we focus on continuum systems whose evolution can be formulated as a system of ordinary differential equations parametrized by space, as done in \cite{DiStefano2025a} for volumetric growth. The extension of the proposed framework to a fully partial differential equation setting is left to future work. The significance of that approach, however, does not lie in its application to growth alone. Rather, it relies on the more general observation that several continuum theories involve internal variables whose evolution is governed by ordinary differential equations and can therefore be interpreted as finite-dimensional control systems after suitable parametrizations.

\subsection{Geometric Control Theory of remodelling} We understand remodelling as the time-dependent reorganization of the internal structure of a material, accompanied by an evolution of its local mechanical properties. In non-living materials, such structural changes are often associated with plastic rearrangements and irreversible modifications of the material architecture \cite{Lubliner2008a,Micunovic2009a}, whereas in living systems they may arise from biological processes such as adaptation, turnover, and cellular reorganization. \cite{Lubliner2008a,Ambrosi2011a,Ambrosi2019a,Micunovic2009a}. In the framework adopted here, it is represented by an isochoric second-order tensor that transforms the local material organization while preserving volume. Through the Bilby--Kröner--Lee (BKL) decomposition \cite{Sadik2017a,Rodriguez1994a,Preston2010}, this internal transformation is separated from the elastic accommodation required to embed the remodelled natural state into a compatible current configuration. In general, the interactions driving the evolution of the internal structure may not be known with sufficient detail to prescribe their functional form, and their experimental identification may require relating observations across different spatial and temporal scales \cite{Ambrosi2011a,Ambrosi2019a,Epstein2015a,Garikipati2006a,Latorre2018a,Latorre2020a}. In conventional remodelling theories, these interactions are encoded through evolution laws \cite{Epstein2007a,Micunovic2009a}, while a control-theoretic viewpoint introduces an additional level of description by asking whether the resulting remodelling process can be actively steered toward prescribed states. Following \cite{DiStefano2025a}, we here extend the same conceptual framework to remodelling. The objective is to investigate whether the internal variables governing remodelling can be steered toward prescribed states through suitable generalized external actions. In this setting, the remodelling law defines the uncontrolled dynamics, or drift, of the system, whereas additional control functions represent external actions that are not already contained within that law \cite{DiStefano2025a}. These controls act directly on the remodelling dynamics and are selected in order to drive the internal state toward a prescribed target configuration or along a desired trajectory. Accordingly, \emph{a posteriori controllability} of remodelling consists in reconstructing the generalized actions required to realize a prescribed reorganization of the material structure.

\paragraph{Possible experimental implementation of remodelling control.} A control action should be understood as an external intervention capable of modifying the evolution of the internal material structure beyond the constitutive response generated by the local stress state alone. In biological tissues, such actions may be associated with biochemical stimuli, growth factors, pharmacological treatments, electrical stimulation, or controlled mechanical loading protocols capable of influencing cell activity, matrix turnover, and fibre reorientation \cite{Ambrosi2019a,Humphrey2014a,Chen2019a}. In non-living materials, possible control mechanisms include thermal treatments, electromagnetic fields, controlled plastic deformation, residual-stress engineering, microstructural processing, or external stimuli capable of inducing preferential rearrangements of the internal architecture \cite{Qi2022a,Liu2024a}. More generally, advances in programmable matter, smart materials, and architectured metamaterials increasingly allow external fields and boundary actions to influence internal structural transformations in a controlled manner. 

Within the perspective of \emph{a posteriori controllability}, the objective is not to prescribe a specific physical mechanism a priori, but rather to determine the generalized actions required to achieve a desired remodelling evolution. Once such actions have been reconstructed through the control-theoretic framework, the associated controls may be interpreted as target inputs for experimental design. Establishing a correspondence between the mathematical controls and physically realizable stimuli is therefore a natural direction for future research.

\paragraph{Manuscript organization.} The manuscript is organized as follows. Section \ref{sec_thFram} introduces the kinematic, constitutive and balance equations governing deformation and remodelling. Section \ref{sec_lin} derives the linearized remodelling equations while retaining finite rotations of the principal remodelling directions. Section \ref{sec_state} develops the corresponding nonlinear state-space formulation and studies its equilibrium configurations and regularity properties. Section \ref{sec_control} reviews the main tools of Geometric Control Theory employed throughout the paper. Section \ref{sec_under} investigates under-actuated controllability for selected three-dimensional remodelling regimes, whereas Section \ref{sec:planar_state_space_equation} addresses the planar reduction and establishes small-time local controllability through different actuation strategies. Final remarks and future perspectives are collected in Section \ref{sec_concl}.

\section{Theoretical framework for deformation and remodelling}
\label{sec_thFram}
We consider a single-phase solid body subject to deformation and remodelling, two distinct but mutually coupled physical processes. Deformation denotes changes in the placement and shape of the body at the continuum scale, while remodelling refers to the time‑dependent reorganisation of its internal structure and the consequent evolution of local material properties. On this basis, we introduce the notation and the mathematical framework used throughout the paper, formulate the relevant kinematics, and state the constitutive assumptions that govern the medium’s mechanical response.

\subsection{Configurations, natural state and kinematics}
Let $\mathscr{B}\subset\mathscr{S}$ be the reference configuration of the considered continuum body, where $\mathscr{S}$ is the three-dimensional Euclidean space. We equip $\mathscr{B}$ with the metric induced by $\mathscr{S}$ and adopt oriented Cartesian coordinates, so that each material point $X\in\mathscr{B}$ and $x\in\mathscr{S}$ are labelled as
\begin{linenomath}
\begin{align}
&X=(X_1,X_2,X_3)\in\mathbb{R}^3 \quad \mbox{and} \quad x=(x_1,x_2,x_3)\in\mathbb{R}^3,
\end{align}
\end{linenomath}
where, for notational simplicity, the same symbols $X$ or $x$ denote both material points and their coordinate representation. The motion of $\mathscr{B}$ is a one-parameter family of embeddings
\begin{linenomath}
\begin{align}
\chi(\cdot,t):\mathscr{B}\longrightarrow\mathscr{S},
\end{align}
\end{linenomath}
parametrised by time $t\in\mathscr{I}$, where $\mathscr{I}=[t_{\rm in}, t_{\rm fin}]\subset\mathbb{R}$ is the time window and the image 
\begin{linenomath} 
\begin{align} \mathscr{B}_t:=\chi(\mathscr{B},t) \end{align} 
\end{linenomath} 
is the current configuration of $\mathscr{B}$ at time $t\in\mathscr{I}$. In particular, the material point $X\in\mathscr{B}$ is mapped into $x=\chi(X,t)\in\mathscr{B}_t$ by $\chi$ at time $t\in\mathscr{I}$. Moreover, we denote by $T_X\mathscr{B}$ and $T_x\mathscr{S}$ the tangent spaces to $\mathscr{B}$ at $X$ and to $\mathscr{S}$ at $x$, respectively, and by $T_X^*\mathscr{B}$ and $T_x^*\mathscr{S}$ the corresponding cotangent spaces. Finally, we introduce two positively oriented orthonormal (Cartesian) bases
\begin{linenomath}
\begin{align}
\{\boldsymbol{G}_A(X)\}_{A=1,2,3}
\quad\text{and}\quad
\{\boldsymbol{g}_a(x)\}_{a=1,2,3},
\label{eq_bases}
\end{align}
\end{linenomath}
of $T_X\mathscr{B}$ and $T_x\mathscr{S}$, respectively, with $X\in\mathscr{B}$ and $x\in\mathscr{S}$. From here on, uppercase Latin indices refer to the material quantities, i.e., associated with $\mathscr{B}$, while lowercase Latin indices indicate spatial quantities, namely defined on $\mathscr{S}$. 

The deformation gradient tensor is the tangent map of $\chi(\cdot,t)$ at $X\in\mathscr{B}$ and $t\in\mathscr{I}$, i.e., 
\begin{linenomath}
\begin{align}
\boldsymbol{F}(X,t):=T\chi(X,t):T_{X}\mathscr{B}\rightarrow T_{\chi(X,t)}\mathscr{S}, 
\end{align}
\end{linenomath}
whose components are given by
\begin{linenomath}
\begin{align}
[\boldsymbol{F}(X,t)]_{aA}=\dfrac{\partial \chi_{a}}{\partial X_{A}}(X,t),
\end{align}
\end{linenomath}
and since the motion is assumed to be orientation preserving, the (material) volumetric ratio
\begin{linenomath}
\begin{align}
J(X,t):=\det\boldsymbol{F}(X,t)>0.
\label{eq_vol_ratio}
\end{align}
\end{linenomath}
is strictly positive, for any $(X,t)\in\mathscr{B}\times\mathscr{I}$. We also introduce the right Cauchy--Green deformation tensor 
\begin{linenomath} 
\begin{align} 
\boldsymbol{C}(X,t) := \boldsymbol{F}^{\mathrm{T}}(X,t) \boldsymbol{F}(X,t) : T_{X}\mathscr{B} \longrightarrow T_{X}\mathscr{B},
\end{align} 
\end{linenomath} 
where the superscript $^\mathrm T$ denotes the metric adjoint with respect to the Euclidean inner products on the relevant domain and codomain.

We describe remodelling through a second-order tensor field, referred to as the \emph{remodelling tensor} \cite{Ambrosi2011a,Ambrosi2019a,Epstein2007a,Micunovic2009a}, which is defined over $\mathscr{B}\times\mathscr{I}$ and denoted by $\boldsymbol{F}_{\hspace{-0.5mm}\mathrm{p}}$. The introduction of $\boldsymbol{F}_{\hspace{-0.5mm}\mathrm{p}}$ stems from the Bilby--Kr\"oner--Lee (BKL) decomposition of the deformation gradient tensor \cite{Sadik2017a,Rodriguez1994a,Ciancio2008a,Preston2010}, according to which 
\begin{linenomath} 
\begin{align} 
\boldsymbol{F}(X,t)= \boldsymbol{F}_{\hspace{-0.5mm}\mathrm{e}}(X,t)\boldsymbol{F}_{\hspace{-0.5mm}\mathrm{p}}(X,t), \qquad (X,t)\in\mathscr{B}\times\mathscr{I}, \label{def_BKL_decomp} 
\end{align} 
\end{linenomath} 
where $\boldsymbol{F}_{\hspace{-0.5mm}\mathrm{e}}$ is the \emph{elastic} or \emph{accommodating tensor}.

\paragraph{A brief overview of the BKL decomposition.} According to the multiplicative splitting in Equation~\eqref{def_BKL_decomp}, the overall deformation of $\mathscr{B}$ results from two physically distinct contributions. The tensor $\boldsymbol{F}_{\mathrm p}$ describes changes in the material organisation induced by remodelling, whereas $\boldsymbol{F}_{\mathrm e}$ accounts for the elastic accommodation required to realise the remodelled state within the current configuration \cite{Sadik2017a,Rodriguez1994a,Ciancio2008a,Preston2010,Micunovic2009a}. In general, neither $\boldsymbol{F}_{\mathrm p}$ nor $\boldsymbol{F}_{\mathrm e}$ is integrable, that is, neither can be identified with the gradient of a motion-like map \cite{Ciancio2008a,Preston2010,Micunovic2009a} and Equation~\eqref{def_BKL_decomp} is understood locally, as a factorisation acting on each material, or body, element, rather than as a global transformation of $\mathscr{B}$. Following \cite{Ciancio2008a,Preston2010,Micunovic2009a}, $\boldsymbol{F}_{\mathrm p}$ may be viewed as mapping each body element into a locally relaxed, stress-free state, referred to as its \emph{natural state}. This state is conceptually obtained by isolating an infinitesimal material portion from its surroundings and allowing it to relax independently. However, the natural states of neighbouring body elements cannot be assembled into a globally compatible, stress-free configuration. The elastic distortions represented by $\boldsymbol{F}_{\mathrm e}$ are therefore required to accommodate these local states within a compatible current configuration. When the local natural states are incompatible, such elastic accommodation may give rise to residual stresses, even in the absence of external loads \cite{Micunovic2009a}.

Using a standard notation, for each $X\in\mathscr{B}$ and $t\in\mathscr{I}$, we postulate the existence of a vector space $\mathscr{N}_X(t)$ representing the locally relaxed, stress-free counterpart of $T_X\mathscr{B}$ \cite{Ciancio2008a,Preston2010,Micunovic2009a}. The remodelling tensor maps the tangent space of the reference configuration to the local natural space, i.e., 
\begin{linenomath} 
\begin{align} 
\boldsymbol{F}_{\hspace{-0.5mm}\mathrm{p}}(X,t): T_X\mathscr{B} \longrightarrow \mathscr{N}_X(t). \end{align} 
\end{linenomath} 
In turn, the accommodating tensor maps the local natural space into the tangent space to the current configuration, namely, 
\begin{linenomath} 
\begin{align} 
\boldsymbol{F}_{\hspace{-0.5mm}\mathrm{e}}(X,t): \mathscr{N}_X(t) \longrightarrow T_{\chi(X,t)}\mathscr{S}. \end{align} 
\end{linenomath} 
Accordingly, the multiplicative decomposition may be represented locally as 
\begin{linenomath} 
\begin{align} 
T_X\mathscr{B} \xrightarrow{\;\boldsymbol{F}_{\hspace{-0.5mm}\mathrm{p}}(X,t)\;} \mathscr{N}_X(t) \xrightarrow{\;\boldsymbol{F}_{\hspace{-0.5mm}\mathrm{e}}(X,t)\;} T_{\chi(X,t)}\mathscr{S}, 
\end{align} 
\end{linenomath} 
for every $X\in\mathscr{B}$ and $t\in\mathscr{I}$. The family of local natural spaces $\mathscr{N}_X(t)$ over $X\in\mathscr{B}$ defines the natural state of the continuum at time $t$, i.e.,
\begin{linenomath} 
\begin{align} 
\mathscr{N}(t)=\bigsqcup_{X\in\mathscr{B}} \{X\}\times\mathscr{N}_X(t), \qquad t\in \mathscr{I}.
\end{align} 
\end{linenomath} 
Analogously to Equation \eqref{eq_bases}, for each $(X,t)\in\mathscr{B}\times\mathscr{I}$, we choose a positively oriented orthonormal basis
\begin{linenomath}
\begin{align}
\{\mathfrak{g}_\alpha(X,t)\}_{\alpha=1,2,3}
\label{eq_bases2}
\end{align}
\end{linenomath}
of $\mathscr{N}_X(t)$ and with Greek indices denoting physical quantities related to the natural state. We remark that, the selected orthonormal bases in Equations \eqref{eq_bases} and \eqref{eq_bases2} identify the three Euclidean spaces $T_{X}\mathscr{B}$, $T_{x}\mathscr{S}$ and $\mathscr{N}_X(t)$ with $\mathbb{R}^3$, for each $(X,t)\in\mathscr{B}\times\mathscr{I}$, while preserving their distinct geometric roles. Hereafter, whenever no confusion can arise, the same symbols are used for tensors and for their matrix representations in these bases. 

\medskip
\noindent
By virtue of Equation~\eqref{def_BKL_decomp}, the material volumetric ratio in Equation \eqref{eq_vol_ratio} admits the multiplicative decomposition $J=J_{\mathrm{e}}J_{\mathrm{p}}$, where $J_{\mathrm e}:=\det\boldsymbol{F}_{\mathrm e}$ and $J_{\mathrm p}:=\det\boldsymbol{F}_{\mathrm p}$ quantify the elastic and remodelling-induced volume changes, respectively. We assume remodelling to induce isochoric distortions in $\mathscr{B}$, so that 
\begin{linenomath} 
\begin{align} 
J_{\mathrm p}(X,t)=1, \qquad \text{for all }(X,t)\in\mathscr{B}\times\mathscr{I}.
\label{iso_constr}
\end{align} 
\end{linenomath} 
This hypothesis reflects the fact that the present work is concerned with remodelling processes that reorganize the internal material architecture without involving volumetric growth or mass production. For later use, we introduce the elastic right Cauchy--Green tensor 
\begin{linenomath} 
\begin{align} 
\boldsymbol{C}_{\hspace{-0.5mm}\mathrm{e}}(X,t) := \boldsymbol{F}_{\hspace{-0.5mm}\mathrm{e}}^{\mathrm{T}}(X,t) \boldsymbol{F}_{\hspace{-0.5mm}\mathrm{e}}(X,t) : \mathscr{N}_X(t) \longrightarrow \mathscr{N}_X(t), \end{align} 
\end{linenomath} 
and the remodelling right Cauchy--Green tensor
\begin{linenomath} 
\begin{align} 
\boldsymbol{C}_{\hspace{-0.5mm}\mathrm{p}}(X,t) := \boldsymbol{F}_{\hspace{-0.5mm}\mathrm{p}}^{\mathrm{T}}(X,t) \boldsymbol{F}_{\hspace{-0.5mm}\mathrm{p}}(X,t) : T_X\mathscr{B} \longrightarrow T_X\mathscr{B}, \end{align} 
\end{linenomath} 
for every $(X,t)\in\mathscr{B}\times\mathscr{I}$.

\begin{remark}[Euclidean representation]
\label{rem_metric}
The tangent spaces $T_X\mathscr{B}$ and $T_x\mathscr{S}$ and the natural spaces $\mathscr{N}_X(t)$ are endowed with Euclidean inner products. The bases in Equations \eqref{eq_bases} and \eqref{eq_bases2} are positively oriented and orthonormal with respect to these inner products. Consequently, the metric adjoint is represented by the ordinary matrix transpose. The identity tensors acting on $T_X\mathscr{B}$, $T_x\mathscr{S}$, and $\mathscr{N}_X(t)$ remain geometrically distinct, although each is represented by the identity matrix in the corresponding orthonormal basis. Finally, time derivatives of tensors are understood as derivatives of their matrix representatives in this selected Euclidean trivialisation. 
\end{remark}

\subsection{Constitutive framework}
We assume a hyperelastic constitutive response with the remodelling tensor $\boldsymbol F_{\rm p}$ acting as an internal degree of freedom \cite{dicarlo2002a}. Accordingly, there exists a strain-energy density per unit volume of the reference configuration, denoted by $\psi_{\mathrm R}$, which characterises the mechanical response of the body. We initially admit the constitutive representation
\begin{linenomath}
\begin{align}
\psi_{\mathrm{R}}(X,t)=\check{\psi}_{\mathrm{R}}(\boldsymbol{F}(X,t),X,t)
\end{align}
\end{linenomath}
We remark that the explicit dependence on $X\in\mathscr{B}$ and $t\in\mathscr{I}$ allows the material properties to vary in space and time as a result of remodelling (see, for instance, \cite{Epstein2007a}). To distinguish between changes in material response induced by remodelling and the intrinsic constitutive behaviour of the material itself, we introduce a strain-energy density $\check\psi_{\rm n}$, defined per unit volume of the local natural state, such that, consistently with the material-uniformity assumption (see \cite{Epstein2007a} and references therein)
\begin{linenomath}
\begin{align}
\psi_{\mathrm{R}}(X,t)=\check{\psi}_{\mathrm{R}}(\boldsymbol{F}(X,t),X,t)=J_{\mathrm{p}}\check{\psi}_{\mathrm{n}}(\boldsymbol{F}(X,t)\boldsymbol{F}_{\hspace{-0.5mm}\mathrm{p}}^{-1}(X,t))\equiv J_{\mathrm{p}}\check{\psi}_{\mathrm{n}}(\boldsymbol{F}_{\hspace{-0.5mm}\mathrm{e}}(X,t)).
\end{align}
\end{linenomath}
Thus, any explicit dependence of the referential strain-energy density on the material point and time is mediated by the remodelling tensor. In particular, since $J_{\mathrm{p}}=1$, we obtain
\begin{linenomath}
\begin{align}
\psi_{\mathrm{R}}(X,t)=\check{\psi}_{\mathrm{R}}(\boldsymbol{F}(X,t),X,t)=\check{\psi}_{\mathrm{n}}(\boldsymbol{F}(X,t)\boldsymbol{F}_{\hspace{-0.5mm}\mathrm{p}}^{-1}(X,t))\equiv \check{\psi}_{\mathrm{n}}(\boldsymbol{F}_{\hspace{-0.5mm}\mathrm{e}}(X,t)).
\end{align}
\end{linenomath}
Thus, the dependence of the strain-energy density on remodelling is entirely mediated by the change in local material architecture encoded by $\boldsymbol F_{\mathrm p}$ and not by volumetric effects. For the present work, we adopt an isotropic Saint-Venant--Kirchhoff strain-energy density in the natural state of the form
\begin{linenomath}
\begin{align}
\check{\psi}_{\mathrm{n}}(\boldsymbol{E}_{\mathrm{e}})=\frac{1}{2}\lambda[\mathrm{tr}(\boldsymbol{E}_{\mathrm{e}})]^2+\mu\boldsymbol{E}_{\mathrm{e}}\cdot\boldsymbol{E}_{\mathrm{e}},
\label{psi_n}
\end{align}
\end{linenomath} 
where the dependence on $\boldsymbol{F}_{\mathrm{e}}$ is given by the elastic Green-Lagrange distortion tensor
\begin{linenomath}
\begin{align}
\boldsymbol{E}_{\mathrm{e}}=\tfrac{1}{2}\left(\boldsymbol{C}_{\mathrm{e}}-\boldsymbol{I}\right),
\label{eq_Ee}
\end{align}
\end{linenomath} 
with $\lambda$ and $\mu$ being the Lam\'e parameters of isotropic linear elasticity and $\boldsymbol{I}$ the identity tensor\footnote{Although the identity tensors acting on $T_X\mathscr{B}$, $T_x\mathscr{S}$, and $\mathscr{N}_X(t)$ are geometrically distinct, for every $X\in\mathscr{B}$ and $t\in\mathscr{I}$, they are represented by the identity matrix in the corresponding orthonormal basis of Equations \eqref{eq_bases} and \eqref{eq_bases2} (see Remark \ref{rem_metric}). Thus, we prefer to not introduce different notations and use the same symbol $\boldsymbol{I}$ independently on the space of definition of the considered identity tensor.}. The elastic second Piola--Kirchhoff stress tensor $\boldsymbol{S}_{\mathrm{e}}$, referred to the natural state, is then obtained by differentiation of the strain-energy density $\check{\psi}_{\rm n}$ in Equation \eqref{psi_n} with respect to $\boldsymbol{E}_{\mathrm{e}}$, so that
\begin{linenomath}
\begin{align}
\boldsymbol{S}_{\mathrm{e}}=\lambda\mathrm{tr}(\boldsymbol{E}_{\mathrm{e}})\boldsymbol{I}+2\mu\boldsymbol{E}_{\mathrm{e}}.
\label{eq_elastic_S}
\end{align}
\end{linenomath} 
Pulling $\boldsymbol{S}_{\mathrm e}$ back through the remodelling tensor, we define the second Piola--Kirchhoff stress in the reference configuration by
\begin{linenomath}
\begin{align}
\boldsymbol{S}
:=J_{\rm p}\boldsymbol{F}_{\mathrm p}^{-1}
\boldsymbol{S}_{\mathrm e}
\boldsymbol{F}_{\mathrm p}^{-\mathrm T}.
\end{align}
\end{linenomath}
and the corresponding first Piola--Kirchhoff stress as
\begin{linenomath}
\begin{align}
\boldsymbol{P}
=\boldsymbol{F}\boldsymbol{S}
=\boldsymbol{F}_{\mathrm e}
\boldsymbol{S}_{\mathrm e}
\boldsymbol{F}_{\mathrm p}^{-\mathrm T}.
\label{first_Piola}
\end{align}
\end{linenomath}
Finally, we introduce the material Mandel-type stress
\begin{linenomath}
\begin{align}
\boldsymbol{\Sigma}:=\boldsymbol{C}\boldsymbol{S}.
\label{eq_Mandel}
\end{align}
\end{linenomath}
Before proceeding, we remark that the Saint-Venant--Kirchhoff model is adopted solely because the subsequent analysis is restricted to the infinitesimal-strain regime. No claim is therefore made regarding its suitability for finite elastic strains.

\subsection{Balance of linear momentum and remodelling law}
Under the assumption of negligible inertia, the local balance of linear momentum referred to the reference configuration takes the form
\begin{linenomath}
\begin{align}
\mathrm{Div}\boldsymbol{P}=\boldsymbol{f}, && \textrm{in\,} \mathscr{B}\times \mathscr{I},
\label{eq_balance_linear_mom}
\end{align}
\end{linenomath}
where  $\boldsymbol{P}$ is the first Piola-Kirchhoff stress tensor introduced in Equation \eqref{first_Piola} and $\boldsymbol{f}$ are body forces. If $\boldsymbol{N}$ denotes the outward unit normal field of the boundary $\partial\mathscr{B}$ of $\mathscr{B}$, we supplement Equation \eqref{eq_balance_linear_mom} with the boundary conditions
\begin{linenomath}
\begin{subequations}
\begin{align}
&\boldsymbol{P}\boldsymbol{N}=\boldsymbol{\tau}, && \textrm{on\,} \mathscr{\partial_{\mathrm{N}}B}\times \mathscr{I},
\label{BVP_tractions}\\
&\chi=\chi_{\mathrm{b}}, && \textrm{on\,} \mathscr{\partial_{\mathrm{D}}B}\times \mathscr{I}, 
\label{BVP_Dirichlet}
\end{align}
\end{subequations}
\end{linenomath}
where the surface traction $\boldsymbol{\tau}$ is prescribed on the Neumann part $\partial_{\mathrm{N}}\mathscr{B}$ of $\partial\mathscr{B}$, whereas the motion $\chi_{\mathrm{b}}$ is assigned on its Dirichlet part $\partial_{\mathrm{D}}\mathscr{B}$. We further note that $\partial_{\mathrm{N}}\mathscr{B}$ and $\partial_{\mathrm{D}}\mathscr{B}$ form a topological partition of $\partial\mathscr{B}$, namely $\partial_{\mathrm{N}}\mathscr{B}\cup\partial_{\mathrm{D}}\mathscr{B}=\partial\mathscr{B}$ and $\overline{\partial_{\mathrm{N}}\mathscr{B}}\cap\partial_{\mathrm{D}}\mathscr{B}=\partial_{\mathrm{N}}\mathscr{B}\cap\overline{\partial_{\mathrm{D}}\mathscr{B}}=\emptyset$, where an overbar denotes the topological closure of a set. 

The boundary-value problem (BVP) defined by Equations \eqref{eq_balance_linear_mom}, \eqref{BVP_tractions} and \eqref{BVP_Dirichlet} is then complemented by an appropriate evolution equation for $\Fp$, or remodelling law, and a suitable initial condition for it. Following the literature, it is convenient to introduce the remodelling velocity tensor $\Lp$ as
\begin{linenomath}
\begin{align}
\boldsymbol{L}_{\hspace{-0.5mm}\mathrm{p}}
:=
\boldsymbol{F}_{\hspace{-0.5mm}\mathrm{p}}^{-1}
\dot{\boldsymbol{F}}_{\hspace{-0.5mm}\mathrm{p}},
\label{def_Lp}
\end{align}
\end{linenomath}
where a superimposed dot means time derivative and, because of the isochoricity constraint in Equation \eqref{iso_constr}, $\boldsymbol{L}_{\hspace{-0.5mm}\mathrm{p}}$ is a traceless tensor, since
\begin{linenomath}
\begin{align}
0=\dot J_{\mathrm p}
=J_{\mathrm p}\operatorname{tr}
\left(
\boldsymbol{F}_{\mathrm p}^{-1}
\dot{\boldsymbol{F}}_{\mathrm p}
\right)
=J_{\mathrm p}\operatorname{tr}(\boldsymbol{L}_{\mathrm p}).
\end{align}
\end{linenomath}
and, hence, $\boldsymbol L_{\mathrm p}$ possesses only eight independent components. In particular, by taking inspiration from \cite{Grillo2023MEMOCSa,Grillo2023MEMOCSb}, we write
\begin{linenomath}
\begin{align}
b(\boldsymbol{C}\Lp\boldsymbol{C}^{-1}+\Lp^{\rm T})+c(\boldsymbol{C}\Lp\boldsymbol{C}^{-1}-\Lp^{\rm T})=\mathrm{dev}\boldsymbol{\Sigma},
\label{rem_law_1}
\end{align}
\end{linenomath}
where $b$ and $c$ are strictly positive constitutive parameters having the physical units of a viscosity. In Equation \eqref{rem_law_1}, we assume the remodelling process to be driven by the deviatoric part of the Mandel stress tensor $\boldsymbol{\Sigma}$ in Equation \eqref{eq_Mandel}. This implies that it associates the evolution of the natural state with directional stress differences, namely, with distortional, rather than volumetric, components of the stress state. It is worth noting that the deviatoric projection removes the trace of $\boldsymbol{\Sigma}$ but does not, in general, enforce its symmetry. Hence, if $\boldsymbol{\Sigma}$ is nonsymmetric, $\mathrm{dev}\boldsymbol{\Sigma}$ may retain both symmetric and skew-symmetric contributions. Moreover, we have that the left-hand side is traceless, consistently with the fact that the right-hand side in Equation \eqref{rem_law_1} is purely deviatoric. Finally, we set the initial condition for $\Fp$ as
\begin{linenomath}
\begin{align}
\Fp(X,0)=\Fp^{0}(X), \qquad X\in\mathscr{B}.
\label{initial_condition_Fp}
\end{align}
\end{linenomath}
\paragraph{Mandel stress and control actions.} In the present framework, $\mathrm{dev}\boldsymbol{\Sigma}$ acts as a constitutive driving force governing the evolution of the natural state \cite{Cleja2000a}. Through Equation~\eqref{rem_law_1}, it determines the uncontrolled remodelling dynamics and therefore defines the drift term of the corresponding state-space system \cite{Coron2007,DiStefano2025a}. The control functions, instead, have a fundamentally different role, since they represent additional generalized actions superposed to the evolution law and reconstructed through the controllability analysis.

\subsection{Parametrisation of remodelling}
Since $\boldsymbol{F}_{\hspace{-0.5mm}\mathrm{p}}$ is unimodular, and hence non-singular, it admits the right polar decomposition 
\begin{linenomath} 
\begin{align} 
\boldsymbol{F}_{\mathrm p} = \boldsymbol{R}_{\mathrm p}\boldsymbol{U}_{\mathrm p}, 
\end{align} 
\end{linenomath} 
where $\boldsymbol{R}_{\mathrm p}$ is a proper rotation tensor and $\boldsymbol{U}_{\mathrm p}$ is symmetric, positive definite, and unimodular. By the spectral theorem, there exist a rotation tensor $\boldsymbol{Q}_{\mathrm p}$ and a positive-definite, diagonal, unimodular tensor $\boldsymbol{D}_{\mathrm p}$ such that
\begin{linenomath}
\begin{align}
\boldsymbol{U}_{\mathrm p} = \boldsymbol{Q}_{\mathrm p} \boldsymbol{D}_{\mathrm p} \boldsymbol{Q}_{\mathrm p}^{\mathrm T} \quad \mbox{and, therefore, } \quad \boldsymbol{F}_{\mathrm p}
=\boldsymbol{R}_{\mathrm p}
\left(
\boldsymbol{Q}_{\mathrm p}
\boldsymbol{D}_{\mathrm p}
\boldsymbol{Q}_{\mathrm p}^{\mathrm T}
\right).
\label{eq_polar_dec}
\end{align}
\end{linenomath}
This decomposition separates changes in the intensity of remodelling, encoded by the principal stretches, from changes in the orientation of the principal remodelling directions. To automatically satisfy positivity of the principal stretches and simplify the subsequent linearisation, we introduce the logarithmic variables $\ell_1,\ell_2:\mathscr{B}\times\mathscr{I}\to\mathbb{R}$ and set
\begin{linenomath}
\begin{align}
\boldsymbol{D}_{\mathrm p}
:=\operatorname{diag}
\left\{
\mathrm e^{\ell_1},
\mathrm e^{\ell_2},
\mathrm e^{-\ell_1-\ell_2}
\right\},
\label{eq_principal_str}
\end{align}
\end{linenomath}
with the third logarithmic stretch being $\ell_{3}=-\ell_1-\ell_2$ as a consequence of isochoricity. The rotation tensor $\Qp$ determines the orientation of
the principal remodelling directions in the reference configuration with respect to the adopted material
basis. To provide an explicit representation of this tensor, we introduce
the angle fields $\alpha,\beta,\gamma:
\mathscr{B}\times\mathscr{I} \longrightarrow \mathbb{R}$, and parametrise $\Qp$ through a proper Euler-angle representation, namely
\begin{linenomath} 
\begin{align} 
\Qp\hspace{-0.5mm}=\hspace{-0.5mm}\begin{pmatrix} \cos\alpha\cos\beta\cos\gamma-\sin\alpha\sin\gamma & -\cos\alpha\cos\beta\sin\gamma-\sin\alpha\cos\gamma & \cos\alpha\sin\beta \\ \sin\alpha\cos\beta\cos\gamma+\cos\alpha\sin\gamma & -\sin\alpha\cos\beta\sin\gamma+\cos\alpha\cos\gamma & \sin\alpha\sin\beta \\ -\sin\beta\cos\gamma & \sin\beta\sin\gamma & \cos\beta \end{pmatrix}.
\label{eq_Qp_explicit} 
\end{align} 
\end{linenomath}
We finally remark that the angle representation is local and is not globally unique. In
fact, when
\begin{linenomath}
\begin{align}
\sin\beta=0,
\qquad\text{that is,}\qquad
\beta=k\pi,
\quad k\in\mathbb{Z},
\end{align}
\end{linenomath}
the first and third rotations occur about the same effective axis, and the angles $\alpha$ and $\gamma$ cannot be determined independently. This is the standard gimbal-lock singularity associated with the Euler-angle representation: the rotation tensor $\Qp$ remains regular while its representation in terms of the selected angles becomes singular.

As a consequence of Equations \eqref{eq_polar_dec}, \eqref{eq_principal_str} and \eqref{eq_Qp_explicit}, the remodelling tensor is completely described by the finite set of variables $(\ell_1,\ell_2,\alpha,\beta,\gamma)$ together with the rotational contribution associated with $\boldsymbol R_{\mathrm p}$. Accordingly, the tensorial remodelling law in Equation \eqref{rem_law_1} can be reformulated as a nonlinear system of ordinary differential equations for these variables. This representation provides the basis for the state-space formulation developed in the following sections and for the subsequent controllability analysis. Also the initial condition for $\Fp$ in Equation \eqref{initial_condition_Fp} becomes an initial condition for such quantities.

\section{Linearization and final form of the governing equations}
\label{sec_lin}

We now derive the linearised form of the governing equations under the
assumption of infinitesimal deformations and infinitesimal remodelling
stretches, while retaining the finite character of the rotations
$\boldsymbol{R}_{\rm p}$ and $\boldsymbol{Q}_{\rm p}$, which are not
linearised in the present work. 

\paragraph{Kinematics and stress.} By introducing a smallness positive parameter $\eta\ll 1$, the deformation gradient tensor $\boldsymbol{F}$ reads
\begin{linenomath}
\begin{align}
\boldsymbol{F}
&=
\boldsymbol I+\eta\,\mathrm{Grad}\boldsymbol{u}+o(\eta),
\label{eq_F_lin}
\end{align}
\end{linenomath}
where $\boldsymbol{u}:\mathscr{B}\times\mathscr{I}\rightarrow\mathscr{S}$ is the displacement vector field. Analogously, the tensor $\boldsymbol{D}_{\rm p}$, collecting the principal
remodelling stretches, is expanded as
\begin{linenomath}
\begin{align}
\boldsymbol{D}_{\rm p}
&=
\boldsymbol I+\eta\,\boldsymbol\Lambda_{\rm p}+o(\eta),
\qquad
\boldsymbol{\Lambda}_{\rm p}
=
\mathrm{diag}\{p_{1},p_{2},-p_{1}-p_{2}\},
\label{eq_Dp}
\end{align}
\end{linenomath}
where we have used the first-order expansions
$\mathrm e^{\ell_i}=1+\eta p_i+o(\eta)$, for $i=1,2$, and
$\mathrm e^{-(\ell_1+\ell_2)}=1-\eta(p_1+p_2)+o(\eta)$, where $p_{i}:\mathscr{B}\times\mathscr{I}\rightarrow\mathbb{R}$ is the first-order amplitudes of the logarithmic remodelling stretches $\ell_{i}$. Moreover, $\mathrm{tr}\boldsymbol\Lambda_{\rm p}=0$, consistently with
the isochoric character of $\boldsymbol D_{\rm p}$. Substitution of Equation~\eqref{eq_Dp} into the polar representation of
$\boldsymbol F_{\rm p}$ of Equation~\eqref{eq_polar_dec} yields
\begin{linenomath}
\begin{align}
\boldsymbol{F}_{\rm p}
&=
\boldsymbol{R}_{\rm p}
+\eta\,\boldsymbol{R}_{\rm p}\boldsymbol{Q}_{\rm p}
\boldsymbol{\Lambda}_{\rm p}\boldsymbol{Q}_{\rm p}^{\rm T}
+o(\eta),
&
\boldsymbol{F}_{\rm p}^{-1}
&=
\boldsymbol{R}_{\rm p}^{\rm T}
-\eta\,\boldsymbol{Q}_{\rm p}\boldsymbol{\Lambda}_{\rm p}
\boldsymbol{Q}_{\rm p}^{\rm T}\boldsymbol{R}_{\rm p}^{\rm T}
+o(\eta).
\label{eq_Fp_lin}
\end{align}
\end{linenomath}
Inserting Equations~\eqref{eq_F_lin} and~\eqref{eq_Fp_lin} into
Equation~\eqref{def_BKL_decomp}, we obtain $\boldsymbol F_{\rm e}$ as
\begin{linenomath}
\begin{align}
\boldsymbol F_{\rm e}
&=
\left[
\boldsymbol I+\eta\left(
\mathrm{Grad}\boldsymbol{u}
-\boldsymbol Q_{\rm p}\boldsymbol\Lambda_{\rm p}
\boldsymbol Q_{\rm p}^{\rm T}
\right)
\right]
\boldsymbol{R}_{\rm p}^{\rm T}
+o(\eta).
\label{eq_Fe_lin}
\end{align}
\end{linenomath}
Accordingly, the elastic Green--Lagrange distortion tensor in
Equation~\eqref{eq_Ee} takes the form
\begin{linenomath}
\begin{align}
\boldsymbol E_{\rm e}
=
\eta\,\boldsymbol{R}_{\rm p}
\left[
\boldsymbol{\varepsilon}
-\boldsymbol Q_{\rm p}\boldsymbol\Lambda_{\rm p}
\boldsymbol Q_{\rm p}^{\rm T}
\right]
\boldsymbol{R}_{\rm p}^{\rm T}
+o(\eta), \quad \quad \boldsymbol{\varepsilon}
:=
\mathrm{sym}(\mathrm{Grad}\boldsymbol{u}),
\label{eq_Ee_lin}
\end{align}
\end{linenomath}
where $\boldsymbol{\varepsilon}$ is the infinitesimal strain tensor. Finally, the Mandel stress tensor $\boldsymbol{\Sigma}$ can be linearised as
\begin{linenomath}
\begin{align}
\boldsymbol{\Sigma}
&=
\eta\left[
\lambda\mathrm{tr}(\boldsymbol{\varepsilon})\boldsymbol I
+2\mu
\left(
\boldsymbol{\varepsilon}
-\boldsymbol Q_{\rm p}\boldsymbol\Lambda_{\rm p}
\boldsymbol Q_{\rm p}^{\rm T}
\right)
\right]
+o(\eta).
\label{eq_Sigma_lin}
\end{align}
\end{linenomath}
We remark that, at first order in $\eta$, the finite rotation tensor
$\boldsymbol{R}_{\rm p}$ does not appear explicitly in $\boldsymbol \Sigma$,
since its contributions cancel under the pull-back from the natural
state to the reference configuration. In particular,
$\boldsymbol{\Sigma}$ is symmetric at first order in $\eta$ although this property does not
hold in general in the finite-strain theory. Moreover, since
$\boldsymbol Q_{\rm p}\boldsymbol\Lambda_{\rm p}
\boldsymbol Q_{\rm p}^{\rm T}$ is traceless, the deviatoric part of the
Mandel stress tensor is
\begin{linenomath}
\begin{align}
\mathrm{dev}\boldsymbol\Sigma
&=
2\eta\mu\left[
\mathrm{dev}(\boldsymbol{\varepsilon})
-\boldsymbol Q_{\rm p}\boldsymbol\Lambda_{\rm p}
\boldsymbol Q_{\rm p}^{\rm T}
\right]
+o(\eta).
\label{eq_mandel_dev}
\end{align}
\end{linenomath}

\paragraph{Linearization of the balance of linear momentum.} The linearization of the BVP in Equations \eqref{eq_balance_linear_mom}, \eqref{BVP_tractions} and \eqref{BVP_Dirichlet} is performed by recalling that, at the first order of $\eta$, the first Piola--Kirchhoff stress tensor is 
\begin{linenomath}
\begin{align}
\boldsymbol{P}=
\eta\left[
\lambda\mathrm{tr}(\boldsymbol{\varepsilon})\boldsymbol I
+2\mu
\left(
\boldsymbol{\varepsilon}
-\boldsymbol Q_{\rm p}\boldsymbol\Lambda_{\rm p}
\boldsymbol Q_{\rm p}^{\rm T}
\right)
\right]
+o(\eta),
\label{eq_P_lin}
\end{align}
\end{linenomath}
so that
\begin{linenomath}
\begin{subequations}
\begin{align}
&\mathrm{Div}\left[
\lambda\mathrm{tr}(\boldsymbol{\varepsilon})\boldsymbol I
+2\mu
\left(
\boldsymbol{\varepsilon}
-\boldsymbol Q_{\rm p}\boldsymbol\Lambda_{\rm p}
\boldsymbol Q_{\rm p}^{\rm T}
\right)
\right]=\boldsymbol{f}_{\rm lin}, && \textrm{in\,} \mathscr{B}\times \mathscr{I},
\label{eq_balance_linear_mom2}\\
&\left[
\lambda\mathrm{tr}(\boldsymbol{\varepsilon})\boldsymbol I
+2\mu
\left(
\boldsymbol{\varepsilon}
-\boldsymbol Q_{\rm p}\boldsymbol\Lambda_{\rm p}
\boldsymbol Q_{\rm p}^{\rm T}
\right)
\right]\boldsymbol{N}=\bar{\boldsymbol{\tau}}, && \textrm{on\,} \mathscr{\partial_{\mathrm{N}}B} \times \mathscr{I},
\label{BVP_tractions2}\\
&\boldsymbol{u}=\boldsymbol{u}_{\mathrm{b}}, && \textrm{on\,} \mathscr{\partial_{\mathrm{D}}B}\times \mathscr{I}, 
\label{BVP_Dirichlet2}
\end{align}
\end{subequations}
\end{linenomath}
where $\boldsymbol{u}_{\rm b}$ is the value of $\boldsymbol{u}$ prescribed on the Dirichlet boundary $\mathscr{\partial_{\mathrm{D}}B}$ of $\mathscr{B}$, $\bar{\boldsymbol{\tau}}$ the tractions applied on the Neumann boundary $\mathscr{\partial_{\mathrm{N}}B}$ of $\mathscr{B}$ and $\boldsymbol{f}_{\rm lin}$ is the linearised body force $\boldsymbol{f}$. For consistency with the linearization setting, the prescribed boundary data are assumed to scale as 
\begin{linenomath} 
\begin{align} 
\boldsymbol\tau &= \eta\bar{\boldsymbol\tau}+o(\eta), & \chi_{\mathrm b}(X,t) &= X+\eta\boldsymbol u_{\mathrm b}(X,t)+o(\eta). 
\label{eq_boundary_data_scaling} 
\end{align} 
\end{linenomath}

\paragraph{Linearization of the remodelling law in
Equation \eqref{rem_law_1}.} We first
expand $\boldsymbol L_{\rm p}$ with respect to the remodelling stretches up
to first order in $\eta$, obtaining
\begin{linenomath}
\begin{align}
\boldsymbol L_{\rm p}
&=
\boldsymbol R_{\rm p}^{\rm T}\dot{\boldsymbol R}_{\rm p}
+\eta\left\{\dot{\overline{\boldsymbol Q_{\rm p}\boldsymbol\Lambda_{\rm p}
\boldsymbol Q_{\rm p}^{\rm T}}}
+
\left[
\boldsymbol R_{\rm p}^{\rm T}\dot{\boldsymbol R}_{\rm p},
\boldsymbol Q_{\rm p}\boldsymbol\Lambda_{\rm p}
\boldsymbol Q_{\rm p}^{\rm T}
\right]
\right\}
+o(\eta),
\label{eq_Lp_lin}
\end{align}
\end{linenomath}
where the commutator of two second-order tensors is defined by
\begin{linenomath}
\begin{align}
[\boldsymbol A|\boldsymbol B]
&:=
\boldsymbol A\boldsymbol B-\boldsymbol B\boldsymbol A.
\label{eq_commutator}
\end{align}
\end{linenomath}
The linearization of the right-hand side of the remodelling law in
Equation~\eqref{rem_law_1} then yields
\begin{linenomath}
\begin{align}
b&\left(
\boldsymbol{C}\boldsymbol L_{\rm p}\boldsymbol{C}^{-1}
+\boldsymbol L_{\rm p}^{\rm T}
\right)
+c\left(
\boldsymbol{C}\boldsymbol L_{\rm p}\boldsymbol{C}^{-1}
-\boldsymbol L_{\rm p}^{\rm T}
\right)=
\nonumber\\
&
2c\boldsymbol{R}_{\rm p}^{\rm T}\dot{\boldsymbol{R}}_{\rm p}
+\eta\left\{
2b\left[
\dot{\overline{\boldsymbol Q_{\rm p}\boldsymbol\Lambda_{\rm p}
\boldsymbol Q_{\rm p}^{\rm T}}}
+
\left[
\boldsymbol{R}_{\rm p}^{\rm T}\dot{\boldsymbol{R}}_{\rm p}|
\boldsymbol Q_{\rm p}\boldsymbol{\Lambda}_{\rm p}
\boldsymbol Q_{\rm p}^{\rm T}
\right]
\right]
+2(b+c)
\left[
\boldsymbol{\varepsilon}|
\boldsymbol{R}_{\rm p}^{\rm T}\dot{\boldsymbol{R}}_{\rm p}
\right]
\right\}
+o(\eta).
\label{eq_rem_lin}
\end{align}
\end{linenomath}
The leading term on the right-hand side of Equation~\eqref{eq_rem_lin} is of
order $\eta^0$. It must vanish because Equation~\eqref{eq_rem_lin} is
equated, through the remodelling law, to
Equation~\eqref{eq_mandel_dev}, whose zeroth-order term is null. Equivalently,
Equation~\eqref{eq_mandel_dev} is the first-order linearization of the
right-hand side of Equation~\eqref{rem_law_1}, whereas
Equation~\eqref{eq_rem_lin} is the corresponding first-order linearization
of the right-hand side of Equation~\eqref{rem_law_1}. Thus, for
$c\neq0$, the zeroth-order balance gives
\begin{linenomath}
\begin{align}
\boldsymbol R_{\rm p}^{\rm T}\dot{\boldsymbol R}_{\rm p}
=
\boldsymbol 0, \quad \mbox{ and equivalently, } \quad \dot{\boldsymbol R}_{\rm p}
=
\boldsymbol 0.
\label{eq_Rp_spin_zero}
\end{align}
\end{linenomath}
Upon prescribing the initial condition
\begin{linenomath}
\begin{align}
\boldsymbol R_{\rm p}(X,0)
&=
\boldsymbol I, \qquad X\in\mathscr{B},
\label{eq_Rp_initial}
\end{align}
\end{linenomath}
one obtains
\begin{linenomath}
\begin{align}
\boldsymbol R_{\rm p}(X,t)
&=
\boldsymbol I, \qquad (X,t)\in\mathscr{B}\times\mathscr{I}.
\label{eq_Rp_identity}
\end{align}
\end{linenomath}
\begin{remark}[On the choice of the remodelling rotation] It is important to distinguish between  $\dot{\boldsymbol R}_{\rm p}=\boldsymbol 0$ and $\boldsymbol R_{\rm p}=\boldsymbol I$. The former is an evolution assumption and implies only that
\begin{align}
\boldsymbol R_{\rm p}(X,t)
=
\boldsymbol R_{\rm p}^{0}(X),
\qquad (X,t)\in\mathscr{B}\times\mathscr{I},
\end{align}
whereas the latter requires the additional initial condition
\begin{align}
\boldsymbol R_{\rm p}(X,0)=\boldsymbol R_{\rm p}^{0}(X)=\boldsymbol I, \qquad X\in\mathscr{B}.
\end{align}
Consequently, the identity
$\boldsymbol R_{\rm p}(X,t)=\boldsymbol I$, $(X,t)\in\mathscr{B}\times\mathscr{I}$, follows only from imposing both conditions. To motivate this choice, consider the reference configuration,
for which the total deformation and the plastic stretch satisfy
\begin{align}
\boldsymbol F(X,0)=\boldsymbol I,
\qquad
\boldsymbol U_{\rm p}(X,0)=\boldsymbol I, \qquad X\in\mathscr{B}.
\end{align}
If $\dot{\boldsymbol R}_{\rm p}=\boldsymbol 0$, then
$\boldsymbol F_{\rm p}(X,0)=\boldsymbol R_{\rm p}(X,0)$ and hence
\begin{align}
\boldsymbol F_{\rm e}(X,0)
=
\boldsymbol F(X,0)
\boldsymbol F_{\rm p}^{-1}(X,0)
=
\boldsymbol R_{\rm p}^{\mathrm T}(X,0).
\end{align}
Thus, the initial elastic distortion is purely rotational and this rotation produces no elastic strain. Therefore, an initial remodelling rotation is locally
invisible to the isotropic constitutive response. Choosing
$\boldsymbol R_{\rm p}^{0}(X)=\boldsymbol I$, with $X\in\mathscr{B}$, does not remove any initial
elastic strain or stress but simply selects an initially unrotated
representative of the natural state. We remark that this restriction is exact and amounts to
removing a time-independent rotational degree of freedom that is not
activated by the local isotropic behaviour of $\mathscr{B}$. A spatially uniform initial rotation could alternatively be removed by a single global change of frame. In contrast, a spatially varying field
$\boldsymbol R_{\rm p}^{0}(X)=\boldsymbol R_{\rm p}(X,0)$ cannot generally be eliminated in this
way. Although it remains locally stress-free under isotropy, its spatial variation may generate incompatibility and
become relevant in theories involving material gradients, non-local
effects, or energetic terms depending on
$\operatorname{Grad}\boldsymbol R_{\rm p}$.
\end{remark}
\noindent
After imposing $\boldsymbol{R}_{\rm p}=\boldsymbol{I}$ and in light of Equation \eqref{eq_mandel_dev}, Equation \eqref{eq_rem_lin} becomes
\begin{linenomath}
\begin{align}
\dot{\overline{\boldsymbol Q_{\rm p}\boldsymbol\Lambda_{\rm p}\boldsymbol Q_{\rm p}^{\rm T}}}=
\frac{\mu}{b}
\left[
\mathrm{dev}(\boldsymbol{\varepsilon})
-\boldsymbol Q_{\rm p}\boldsymbol\Lambda_{\rm p}
\boldsymbol Q_{\rm p}^{\rm T}
\right].
\label{eq_final_linearized_evolution2}
\end{align}
\end{linenomath}
The right-hand side of Equation \eqref{eq_final_linearized_evolution2} defines a linear relaxation towards the deviatoric part of the infinitesimal strain, expressed in the basis of the principal remodelling directions with characteristic remodelling time $t_{\mathrm r}:=b/\mu$. Hence, larger values of $t_{\mathrm r}$ describe a slower remodelling response, whereas smaller values correspond to a faster adaptation of the natural state. Moreover, although the coefficient $c$ does not appear explicitly in Equation~\eqref{eq_final_linearized_evolution2}, it has not been set equal to zero. Its role is already exhausted at order $\eta^{0}$: because $c>0$, the zeroth-order part of the constitutive equation enforces $\boldsymbol R_{\rm p}^{\rm T}\dot{\boldsymbol R}_{\rm p}=\boldsymbol0$. Once this spin has vanished, the first-order remodelling velocity is symmetric and the skew-like resistance governed by $c$ is no longer activated and the linearised relaxation is therefore controlled only by $b$. We next introduce the spin associated with $\dot{\boldsymbol{Q}}_{\rm p}$
\begin{linenomath}
\begin{align}
\boldsymbol\Omega
&:=
\boldsymbol Q_{\rm p}^{\rm T}\dot{\boldsymbol Q}_{\rm p},
\qquad
\boldsymbol\Omega^{\rm T}=-\boldsymbol\Omega,
\label{eq_Omega_def}
\end{align}
\end{linenomath}
and with the following kinematic identity holding true
\begin{linenomath}
\begin{align}
\dot{\overline{\boldsymbol Q_{\rm p}\boldsymbol\Lambda_{\rm p}\boldsymbol Q_{\rm p}^{\rm T}}}=\boldsymbol Q_{\rm p}(\dot{\boldsymbol{\Lambda}}_{\rm p}
+\boldsymbol\Omega\boldsymbol\Lambda_{\rm p}
-\boldsymbol\Lambda_{\rm p}\boldsymbol\Omega)\boldsymbol Q_{\rm p}^{\rm T}.
\label{eq_Omega_def2}
\end{align}
\end{linenomath}
Since $\dot{\boldsymbol\Lambda}_{\rm p}$ is symmetric, $\boldsymbol\Omega$ is skew-symmetric, and $\boldsymbol\Lambda_{\rm p}$ is symmetric, the commutator $\boldsymbol\Omega\boldsymbol\Lambda_{\rm p} -\boldsymbol\Lambda_{\rm p}\boldsymbol\Omega$ is symmetric. Consequently, the first-order remodelling velocity is symmetric after setting $\boldsymbol R_{\rm p}=\boldsymbol I$. Accordingly, using the deviatoric part of the linearised Mandel stress, and recalling that it is symmetric too, we obtain the final form of the linearised remodelling law
\begin{linenomath}
\begin{align}
\dot{\boldsymbol{\Lambda}}_{\rm p}
+\boldsymbol\Omega\boldsymbol\Lambda_{\rm p}
-\boldsymbol\Lambda_{\rm p}\boldsymbol\Omega
&=
\frac{\mu}{b}
\left[
\boldsymbol Q_{\rm p}^{\rm T}
\mathrm{dev}(\boldsymbol{\varepsilon})
\boldsymbol Q_{\rm p}
-\boldsymbol\Lambda_{\rm p}
\right].
\label{eq_final_linearized_evolution}
\end{align}
\end{linenomath}
Equation \eqref{eq_final_linearized_evolution} describes the coupled evolution of the principal remodelling stretches and of their associated principal directions. The term $\dot{\boldsymbol\Lambda}_{\rm p}$ accounts for the evolution of the principal remodelling stretches, whereas the commutator $\boldsymbol\Omega\boldsymbol\Lambda_{\rm p} -\boldsymbol\Lambda_{\rm p}\boldsymbol\Omega$ arises from the rotation of the principal remodelling frame and provides the nonlinear coupling between stretches and orientations. Consequently, even within the infinitesimal-strain regime, the remodelling dynamics remains nonlinear through the interaction between the principal stretches and the finite rotations of their corresponding principal directions. Finally, Equation~\eqref{eq_final_linearized_evolution} will constitute the starting point for the controllability analysis developed in the subsequent sections.

\begin{remark}[Initial conditions in the linearised setting]
The initial condition in Equation~\eqref{initial_condition_Fp} must be consistent with the assumed smallness of the remodelling stretches. After choosing $\boldsymbol R_{\rm p}^{0}=\boldsymbol I$, the admissible initial data have the expansion
\begin{linenomath}
\begin{align}
\Fp^{0}(X)
=
\boldsymbol I
+\eta\,
\boldsymbol Q_{\rm p}^{0}(X)
\boldsymbol\Lambda_{\rm p}^{0}(X)
\bigl(\boldsymbol Q_{\rm p}^{0}(X)\bigr)^{\rm T}
+o(\eta),
\qquad X\in\mathscr B,
\label{eq_Fp0_linearized}
\end{align}
\end{linenomath}
where
\begin{linenomath}
\begin{subequations}
\begin{align}
\boldsymbol\Lambda_{\rm p}^{0}(X)
&=\boldsymbol\Lambda_{\rm p}(X,0)=\mathrm{diag}\{p_1^0(X),p_2^0(X),-p_1^0(X)-p_2^0(X)\},
\\
\boldsymbol Q_{\rm p}^{0}(X)
&=\boldsymbol Q_{\rm p}(X,0).
\label{eq_linearized_initial_data}
\end{align}
\end{subequations}
\end{linenomath}
with, $p_1(X,0)=p_1^0(X)$ and $p_2(X,0)=p_2^0(X)$. The initial orientation $\boldsymbol Q_{\rm p}^{0}$ constitutes an independent kinematic datum only when the principal values of $\boldsymbol\Lambda_{\rm p}^{0}$ are pairwise distinct. If two principal values coincide, rotations within the corresponding eigenspace become immaterial. In particular, if $\boldsymbol\Lambda_{\rm p}^{0}=\boldsymbol 0$, the initial Euler angles are completely arbitrary and carry no independent physical information.
\end{remark}

\section{State-space representation of the remodelling law}
\label{sec_state}
The linearised remodelling law derived in Section~\ref{sec_lin} is now reformulated as a finite-dimensional nonlinear state-space system for the principal remodelling stretches $p_{1}$ and $p_{2}$ and the Euler angles $\alpha, \beta$ and $\gamma$ describing the orientation of the principal remodelling directions. By inserting Equation \eqref{eq_Qp_explicit} into Equation  \eqref{eq_Omega_def}, and for the adopted Euler-angle convention, we obtain the explicit representation of the spin tensor $\boldsymbol{\Omega}$ as 
\begin{linenomath}
\begin{align}
\boldsymbol\Omega
=
\begin{pmatrix}
0
&
-\dot{\alpha}\cos\beta-\dot{\gamma}
&
\dot{\alpha}\sin\beta\sin\gamma+\dot{\beta}\cos\gamma
\\
\dot{\alpha}\cos\beta+\dot{\gamma}
&
0
&
\dot{\alpha}\sin\beta\cos\gamma-\dot{\beta}\sin\gamma
\\
-\dot{\alpha}\sin\beta\sin\gamma-\dot{\beta}\cos\gamma
&
-\dot{\alpha}\sin\beta\cos\gamma+\dot{\beta}\sin\gamma
&
0
\end{pmatrix}.
\label{eq_Omega_Euler_explicit}
\end{align}
\end{linenomath}

\paragraph{State arrays.}
We introduce the remodelling state array $z=\begin{bmatrix}
p_1&
p_2&
\alpha&
\beta&
\gamma
\end{bmatrix}^{\rm T}\in\mathbb{R}^{5}$, which collects the internal degrees of freedom governing the remodelling process and, since $\mathrm{dev}(\boldsymbol\varepsilon)$ is symmetric and traceless, it is completely determined by five independent components and, therefore, we introduce the strain state array $
\varepsilon_{\rm dev} = \begin{bmatrix}
\mathrm{dev}(\boldsymbol\varepsilon)_{11}&\mathrm{dev}(\boldsymbol\varepsilon)_{22}&\mathrm{dev}(\boldsymbol\varepsilon)_{12}&\mathrm{dev}(\boldsymbol\varepsilon)_{13}&\mathrm{dev}(\boldsymbol\varepsilon)_{23}
\end{bmatrix}^{\rm T}\in\mathbb{R}^{5}$.

\paragraph{Reduced evolution system.}
Equation \eqref{eq_state_space_remodelling} provides an implicit nonlinear state-space representation of the remodelling dynamics. By selecting the components $(11)$, $(22)$, $(12)$, $(13)$ and $(23)$
of Equation~\eqref{eq_final_linearized_evolution}, the remodelling law can
be written as an implicit first-order state-space system
\begin{linenomath}
\begin{align}
\boldsymbol M(z)\dot{\boldsymbol z}
=
\boldsymbol b(z,\varepsilon_{\rm dev}).
\label{eq_state_space_remodelling}
\end{align}
\end{linenomath}
The state matrix $\boldsymbol M(z)$ multiplying $\dot{\boldsymbol{z}}$ in Equation \eqref{eq_state_space_remodelling} is represented by
\begin{linenomath}
\begin{align}
\boldsymbol M( z)
=
\begin{pmatrix}
1&0&0&0&0
\\
0&1&0&0&0
\\
0&0&(p_1-p_2)\cos\beta&0&p_1-p_2
\\
0&0&-(2p_1+p_2)\sin\beta\sin\gamma
&-(2p_1+p_2)\cos\gamma&0
\\
0&0&-(p_1+2p_2)\sin\beta\cos\gamma
&(p_1+2p_2)\sin\gamma&0
\end{pmatrix},
\label{eq_A_state}
\end{align}
\end{linenomath}
while the array $\boldsymbol{b}(z,\varepsilon_{\rm dev})$ is given by
\begin{linenomath}
\begin{align}
\boldsymbol b( z,\varepsilon_{\rm dev})
=
\frac{\mu}{b}
\begin{pmatrix}
\widehat{\varepsilon}_{11}-p_1
&
\widehat{\varepsilon}_{22}-p_2
&
\widehat{\varepsilon}_{12}
&
\widehat{\varepsilon}_{13}
&
\widehat{\varepsilon}_{23}
\end{pmatrix}^{\rm T}, \quad \widehat{\boldsymbol\varepsilon}
:=
\Qp^{\rm T}\,
\mathrm{dev}(\boldsymbol{\varepsilon})\,
\Qp.
\label{eq_rotated_strain_state}
\end{align}
\end{linenomath}
The entries of $\boldsymbol b$ are completely determined by
Equation \eqref{eq_rotated_strain_state}. Their
fully expanded expressions are algebraically lengthy and provide no
additional insight into the structure of the evolution law and,
therefore, are not reported explicitly. 

\paragraph{Invertibility of the state matrix and drift term.} Direct calculation
yields
\begin{linenomath}
\begin{align}
\det\boldsymbol M(z)
=
-(p_1-p_2)(2p_1+p_2)(p_1+2p_2)\sin\beta.
\label{eq_det_A_state}
\end{align}
\end{linenomath}
Recalling that the principal values of $\boldsymbol\Lambda_{\rm p}$ are
\begin{linenomath}
\begin{align}
\Lambda_1=p_1,
\qquad
\Lambda_2=p_2,
\qquad
\Lambda_3=-p_1-p_2,
\label{eq_principal_Lambda_values}
\end{align}
\end{linenomath}
the stretch-dependent factors appearing in
Equation~\eqref{eq_det_A_state} can be expressed as
\begin{linenomath}
\begin{align}
p_1-p_2
=
\Lambda_1-\Lambda_2, \qquad 2p_1+p_2
=
\Lambda_1-\Lambda_3, \qquad
p_1+2p_2
=
\Lambda_2-\Lambda_3.
\label{eq_invertibilityA}
\end{align}
\end{linenomath}
The vanishing of the determinant identifies two distinct sources of singularity: (i) coincident principal remodelling stretches and (ii) the gimbal-lock singularity associated with the Euler-angle parametrization. Hence, $\boldsymbol M$ is invertible if the principal remodelling stretches must be pairwise distinct and $\beta\neq 0$ and, on the open subset defined by
Equation \eqref{eq_invertibilityA}, there exists the
inverse matrix $\boldsymbol M^{-1}(z)$ reading
\begin{linenomath}
\begin{align}
\boldsymbol M^{-1}(z)
=
\begin{pmatrix}
1&0&0&0&0
\\
0&1&0&0&0
\\
0&0&0&
-\dfrac{\sin\gamma}{(2p_1+p_2)\sin\beta}
&
-\dfrac{\cos\gamma}{(p_1+2p_2)\sin\beta}
\\[3mm]
0&0&0&
-\dfrac{\cos\gamma}{2p_1+p_2}
&
\dfrac{\sin\gamma}{p_1+2p_2}
\\[3mm]
0&0&
\dfrac{1}{p_1-p_2}
&
\dfrac{\cos\beta\sin\gamma}
{(2p_1+p_2)\sin\beta}
&
\dfrac{\cos\beta\cos\gamma}
{(p_1+2p_2)\sin\beta}
\end{pmatrix}.
\label{eq_A_inverse_state}
\end{align}
\end{linenomath}
This leads to the explicit coordinate form of Equation \eqref{eq_state_space_remodelling}, namely
\begin{linenomath}
\begin{align}
\dot{\boldsymbol z}
=\boldsymbol{f}_{0}(z,\varepsilon_{\rm dev}):=
\boldsymbol M^{-1}(z)\boldsymbol b(z,\varepsilon_{\rm dev}).
\label{eq_explicit}
\end{align}
\end{linenomath}
In the terminology of Geometric Control Theory, the vector field $\boldsymbol f_0$ is the drift of the system, namely the uncontrolled dynamics of the remodelling law in the absence of external inputs.

\paragraph{Equilibrium points.} We provide a general result regarding the existence of equilibria for Equation \eqref{eq_explicit} and, equivalently, for Equation \eqref{eq_state_space_remodelling}. The following theorem holds true.

\begin{theorem}
\label{thm_equilibrium}
The equilibrium configurations $(\boldsymbol Q_{\rm p},\boldsymbol\Lambda_{\rm p})$ of the remodelling law \eqref{eq_final_linearized_evolution} (equivalently of \eqref{eq_explicit}) coincide with the diagonal matrix of eigenvalues of $\mathrm{dev}(\boldsymbol\varepsilon)$ and $\boldsymbol Q_{\rm p}$ must align the principal remodelling directions with the corresponding eigenvectors.
\end{theorem}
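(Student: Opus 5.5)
We propose to work directly with the tensorial form \eqref{eq_final_linearized_evolution} rather than with the coordinate system \eqref{eq_explicit}. The aim is to avoid the singular set of $\boldsymbol M(z)$. An equilibrium is a state with $\dot{\boldsymbol\Lambda}_{\rm p}=\boldsymbol 0$ and $\dot{\boldsymbol Q}_{\rm p}=\boldsymbol 0$, so $\boldsymbol\Omega=\boldsymbol Q_{\rm p}^{\rm T}\dot{\boldsymbol Q}_{\rm p}=\boldsymbol 0$, while $\mathrm{dev}(\boldsymbol\varepsilon)$ is held fixed, as it is a frozen datum at the material point $X$. Under these conditions the left-hand side of \eqref{eq_final_linearized_evolution} vanishes identically. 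Since $\mu/b>0$, the equilibrium condition is therefore equivalent to
\begin{linenomath}
\begin{align*}
\boldsymbol Q_{\rm p}^{\rm T}\,\mathrm{dev}(\boldsymbol\varepsilon)\,\boldsymbol Q_{\rm p}=\boldsymbol\Lambda_{\rm p},
\qquad\text{equivalently}\qquad
\mathrm{dev}(\boldsymbol\varepsilon)\,\boldsymbol Q_{\rm p}=\boldsymbol Q_{\rm p}\boldsymbol\Lambda_{\rm p}.
\end{align*}
\end{linenomath}
Reading the second identity column by column gives $\mathrm{dev}(\boldsymbol\varepsilon)\,\boldsymbol Q_{\rm p}\boldsymbol G_i=\Lambda_i\,\boldsymbol Q_{\rm p}\boldsymbol G_i$. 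Hence the columns of $\boldsymbol Q_{\rm p}$, which are the principal remodelling directions, are orthonormal eigenvectors of $\mathrm{dev}(\boldsymbol\varepsilon)$. The diagonal entries $p_1,p_2,-p_1-p_2$ of $\boldsymbol\Lambda_{\rm p}$ are the corresponding eigenvalues. The trace constraint is automatically consistent, because $\mathrm{dev}(\boldsymbol\varepsilon)$ is traceless.

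For the converse, the spectral theorem applies to the symmetric matrix $\mathrm{dev}(\boldsymbol\varepsilon)$. It supplies a proper rotation $\boldsymbol Q_{\rm p}$, obtained by flipping the sign of one eigenvector if needed, together with a diagonal traceless $\boldsymbol\Lambda_{\rm p}$ satisfying the identity above. The pair $(\boldsymbol Q_{\rm p},\boldsymbol\Lambda_{\rm p})$ with $\boldsymbol\Omega=\boldsymbol 0$ and $\dot{\boldsymbol\Lambda}_{\rm p}=\boldsymbol 0$ then satisfies \eqref{eq_final_linearized_evolution}, so it is an equilibrium. The ordering of the eigenvalues and the sign choices of the eigenvectors only correspond to different admissible $\boldsymbol Q_{\rm p}$, which we will note explicitly.

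Equivalence with \eqref{eq_explicit} follows on the open set where $\det\boldsymbol M(z)\neq0$, that is, where the stretches are pairwise distinct and $\sin\beta\ne0$. There $\dot z=\boldsymbol 0$ if and only if $\boldsymbol b(z,\varepsilon_{\rm dev})=\boldsymbol 0$. The condition $\boldsymbol b=\boldsymbol 0$ says exactly $\widehat\varepsilon_{11}=p_1$, $\widehat\varepsilon_{22}=p_2$ and $\widehat\varepsilon_{12}=\widehat\varepsilon_{13}=\widehat\varepsilon_{23}=0$. By symmetry and tracelessness of $\widehat{\boldsymbol\varepsilon}$, this is the full matrix identity $\widehat{\boldsymbol\varepsilon}=\boldsymbol\Lambda_{\rm p}$.

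The main subtlety is the degenerate case, which we will handle by a remark rather than as an obstacle to the statement. If $\mathrm{dev}(\boldsymbol\varepsilon)$ has repeated eigenvalues, $\boldsymbol Q_{\rm p}$ is determined only up to rotations within the eigenspaces. In that case the equilibrium set in Euler-angle coordinates is a positive-dimensional manifold, and it lies precisely in the set where \eqref{eq_explicit} is undefined. We will therefore phrase the conclusion at the tensorial level, where it remains valid without invertibility of $\boldsymbol M$.
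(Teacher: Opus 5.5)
Your proposal is correct and follows the same route as the paper: you reduce the equilibrium condition to $\boldsymbol Q_{\rm p}^{\rm T}\mathrm{dev}(\boldsymbol\varepsilon)\boldsymbol Q_{\rm p}=\boldsymbol\Lambda_{\rm p}$ and conclude by the spectral theorem. Your column-by-column reading of $\mathrm{dev}(\boldsymbol\varepsilon)\boldsymbol Q_{\rm p}=\boldsymbol Q_{\rm p}\boldsymbol\Lambda_{\rm p}$ is slightly cleaner than the paper's claim that $\boldsymbol Q_{\rm p}^{\rm T}\boldsymbol{\mathcal Q}$ must be a signed permutation, which holds only for a simple spectrum; you also supply the properness of $\boldsymbol Q_{\rm p}$ and the equivalence with \eqref{eq_explicit} on the regular set, both of which the paper leaves implicit.
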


\begin{proof}
An equilibrium of Equation \eqref{eq_final_linearized_evolution} is a pair $(\boldsymbol Q_{\rm p},\boldsymbol\Lambda_{\rm p})$ for which 
\begin{linenomath}
\begin{align}
\boldsymbol Q_{\rm p}^{\rm T}\,\mathrm{dev}(\boldsymbol\varepsilon)\,\boldsymbol Q_{\rm p}-\boldsymbol\Lambda_{\rm p}=\boldsymbol 0.
\label{eq_equilibrium_condition}
\end{align}
\end{linenomath}
Since $\mathrm{dev}(\boldsymbol\varepsilon)$ is symmetric, there exists an orthogonal matrix $\boldsymbol{\mathcal Q}$ and a diagonal matrix $\boldsymbol\epsilon=\operatorname{diag}(\epsilon_1,\epsilon_2,\epsilon_3)$, with $\epsilon_1+\epsilon_2+\epsilon_3=0$, such that
\begin{linenomath}
\begin{align}
\mathrm{dev}(\boldsymbol\varepsilon)=\boldsymbol{\mathcal Q}\,\boldsymbol\epsilon\,\boldsymbol{\mathcal Q}^{\rm T}.
\end{align}
\end{linenomath}
Insert this spectral decomposition into Equation \eqref{eq_equilibrium_condition} to obtain
\begin{linenomath}
\begin{align}
\boldsymbol Q_{\rm p}^{\rm T}\,\boldsymbol{\mathcal Q}\,\boldsymbol\epsilon\,\boldsymbol{\mathcal Q}^{\rm T}\,\boldsymbol Q_{\rm p}-\boldsymbol\Lambda_{\rm p}=\boldsymbol 0.
\end{align}
\end{linenomath}
Both sides are symmetric and diagonalizable. The left‑hand side is a diagonal matrix if and only if $\boldsymbol Q_{\rm p}^{\rm T}\,\boldsymbol{\mathcal Q}$ permutes (and possibly changes the sign of) the eigenvector basis of $\boldsymbol\epsilon$. Hence the diagonal entries of \(\boldsymbol\Lambda_{\rm p}\) must be a permutation of the eigenvalues \(\epsilon_i\). Therefore one can choose \(\boldsymbol Q_{\rm p}=\boldsymbol{\mathcal Q}\,\boldsymbol\Pi\), where \(\boldsymbol\Pi\) is a permutation (and sign) matrix that orders the eigenvalues on the diagonal to match the convention for \(\boldsymbol\Lambda_{\rm p}\). With this choice \(\boldsymbol\Lambda_{\rm p}=\boldsymbol\epsilon\) (up to the chosen ordering), and \eqref{eq_equilibrium_condition} is satisfied. This proves that equilibria correspond to taking \(\boldsymbol\Lambda_{\rm p}\) equal to the eigenvalues of \(\mathrm{dev}(\boldsymbol\varepsilon)\) and \(\boldsymbol Q_{\rm p}\) aligning the principal remodelling directions with the corresponding eigenvectors.
\end{proof}

\begin{remark}[Degenerate spectra]
If \(\mathrm{dev}(\boldsymbol\varepsilon)\) has repeated eigenvalues, the equilibrium set is not a discrete set of pairs \((\boldsymbol Q_{\rm p},\boldsymbol\Lambda_{\rm p})\) but contains continuous families: when two (or three) eigenvalues coincide, rotations within the corresponding eigenspace leave \(\boldsymbol\epsilon\) invariant, so \(\boldsymbol Q_{\rm p}\) is determined only up to an orthogonal transformation acting on the repeated eigenspace. In particular, if \(\boldsymbol\epsilon=\boldsymbol0\) then any \(\boldsymbol Q_{\rm p}\) together with \(\boldsymbol\Lambda_{\rm p}=\boldsymbol0\) is an equilibrium. These degenerate configurations are excluded from the controllability analysis developed in the subsequent sections.
\end{remark}

\paragraph{Time-independent strain fields and mechanical equilibrium.} Throughout the present work, the infinitesimal strain tensor is assumed to be prescribed and independent of time. Consequently, the deformation does not evolve in time and acts instead as a fixed mechanical loading entering the remodelling law through its deviatoric part. 

Under this assumption, the balance of linear momentum is not used to determine the time evolution of the strain field. Rather, at each instant of time, it specifies the external body forces and boundary tractions required to maintain the prescribed deformation while the remodelling state evolves. In this sense, remodelling takes place in a mechanically sustained configuration, and the corresponding equilibrium loads may vary in time through their dependence on the evolving remodelling tensor and on the associated stress fields. 

Accordingly, the controllability analysis developed below concerns the internal remodelling variables alone, while the strain field is regarded as a given driving quantity and the linear momentum balance provides the mechanical actions necessary to support the prescribed deformation throughout the remodelling process.

A similar viewpoint was already adopted in \cite{DiStefano2025a} in the context of volumetric growth. Although this modelling assumption was not explicitly emphasised there, the growth dynamics was likewise analysed under prescribed mechanical conditions, with the balance equations intended to determine the loads required to sustain the assigned state while the internal variables evolved. 

\section{Brief revision of Geometric Control Theory}
\label{sec_control}
The remodelling law in Equation~\eqref{eq_final_linearized_evolution}, or equivalently its implicit and explicit state-space representations in Equations~\eqref{eq_state_space_remodelling} and~\eqref{eq_explicit}, respectively, can be interpreted within the framework of Geometric Control Theory. From this perspective, the remodelling variables define the state of the system, while suitable control functions act as external inputs that may be selected to steer the remodelling state towards a prescribed target or along a desired trajectory. This control-theoretic interpretation is consistent with the local character of the adopted remodelling theory, in which remodelling is described as a time-evolution process with the material point $X\in\mathscr B$ acting as a parameter. Since the theory is of order zero in the remodelling tensor $\boldsymbol F_{\rm p}$, no spatial derivatives of $\boldsymbol F_{\rm p}$ are included among the kinematic descriptors. Consequently, for each fixed material point $X\in\mathscr B$, the remodelling variables satisfy a finite-dimensional system of ordinary differential equations in time. 

Before introducing the control functions, the uncontrolled remodelling dynamics is described by the implicit state-space system in Equation~\eqref{eq_state_space_remodelling}. In this formulation, the infinitesimal strain tensor is not a control input, but a mechanical quantity entering the constitutive remodelling law and, in the coupled problem, is determined by the displacement field through the balance equations, compatibility requirements and boundary conditions. The control functions are introduced separately as additional external actions on the remodelling dynamics that may depend on both the material point and time, so that the controlled model defines a family of ordinary differential systems parametrised by $X\in\mathscr B$. For each fixed material point, the controls act as time-dependent inputs to the corresponding local remodelling system. Therefore, provided that suitable regularity with respect to $X$ is assumed, the standard finite-dimensional tools of Geometric Control Theory can be applied pointwise in space. In particular, the controllability properties of the associated ordinary differential control system may be investigated separately at each material point. Within this setting, a distinction must be made between spatially distributed and spatially uniform controls. Spatially distributed controls may vary from one material point to another, whereas spatially uniform controls prescribe the same time-dependent control history throughout $\mathscr B$. The latter condition does not follow from the zero-order character of the remodelling theory and must therefore be introduced as an additional modelling assumption on the class of admissible controls.

In this section, we recall the notation and the main concepts of Geometric Control Theory required for the subsequent analysis and specialise them to the controlled remodelling dynamics.

\begin{definition}[Control system \cite{Coron2007}]
\label{def_contolled_problem}
A control system is a system of first-order, ordinary differential equations written, on its normal form, as
\begin{equation}
\label{eq: control system}
    \dot{\boldsymbol{z}} = \boldsymbol{\mathcal{F}}(z,w),
\end{equation}
where $z \in\mathcal{Z}\subset \mathbb{R}^{n}$ collects state variables of the system at hand, \( w : \mathscr{I} \rightarrow \mathcal{U} \subset \mathbb{R}^{m} \), is a measurable function belonging to $\mathcal{U}$, being the latter an open subset of $\mathbb{R}^{m}$ referred to as the set of admissible controls, while $\boldsymbol{\mathcal{F}}:\mathcal{Z}\times\mathcal{U}\rightarrow \mathbb{R}^{n}$ is a (sufficiently regular) vector field.
\end{definition}
\noindent
We say that the control system in Equation \eqref{eq: control system} is non-linear if the vector field $\boldsymbol{\mathcal{F}}$ is non-linear as a function of $z$ and $w$ and, hence, defines a non-linear system of differential equations. Moreover, $m\leq n$ and if $m<n$ strictly, we say that the control system is under-actuated, since the number of controls is less than the number of state variables. Moreover, for further use, we define the open subset $\mathcal{O}:=\mathcal{Z}\times\mathcal{U}\subset \mathbb{R}^{n}\times\mathbb{R}^{m}$.

\begin{definition}[Equilibrium point \cite{Coron2007}]
\label{equilibrium_pt}
An equilibrium point of the control system in Equation \eqref{eq: control system}
is a pair \((z_{\mathrm{eq}}, w_{\mathrm{eq}}) \in \mathcal{O}\) such that  
\begin{equation}
\boldsymbol{\mathcal{F}}(z_{\mathrm{eq}}, w_{\mathrm{eq}}) = \boldsymbol{0}. 
\end{equation}
\end{definition}
\noindent
A crucial notion in Geometric Control Theory is \textit{controllability}, which pertains to the ability of a system to be driven from an arbitrary initial state to any specified final state within a finite time interval, under the influence of control inputs belonging to a certain functional space.

\begin{definition}[Controllability \cite{Coron2007}]
\label{def: controllbility}
The control system in Equation \eqref{eq: control system} is said to be controllable if, for any $z_{\rm in},\,z_{\rm fin}\in\mathcal{Z}$, there exists a control function \( w : \mathscr{I} \rightarrow \mathcal{U} \) such that the solution to \eqref{eq: control system} with that control satisfies $z(t_{\rm in})=z_{\rm in}$ and $z(t_{\rm fin})=z_{\rm fin}$.
\end{definition}

\noindent
In other words, controllability means that a dynamic system as the one in Equation \eqref{eq: control system} can be {\it driven} or {\it controlled} to pass from any initial state  $z_{\rm in}$ to any target state $z_{\rm fin}$ in the bounded time interval $\mathscr{I}$. The control \( w\) is the external input or influence required to accomplish this. The notion of under-actuation refers to the case $m<n$, namely when the number of control inputs is smaller than the number of state variables.

\medskip
\noindent
For further use, we introduce a weaker form of controllability, relying on Definition \ref{equilibrium_pt} of equilibrium points. In this sense, we speak of \emph{small-time local controllability} (STLC), which refers to controllability with controls $w$ ``near'' \( w_{\mathrm{eq}} \) over a short time horizon.

\begin{definition}[Small Time Local Controllability \cite{Coron2007}]
\label{STLC_def}
Let \((z_{\mathrm{eq}}, w_{\mathrm{eq}}) \in \mathcal{O}\) be an equilibrium of the control system in Equation \eqref{eq: control system}. Then, the system is small-time locally controllable (STLC) at the equilibrium \((z_{\mathrm{eq}}, w_{\mathrm{eq}})\) if, for every \(\rho > 0\), there exists \(r > 0\) such that, for every \( z_{\rm in} \in B_r(z_{\mathrm{eq}}) := \{z \in \mathbb{R}^n : |z - z_{\mathrm{eq}}| < r\} \) and for every \(z_{\rm fin} \in B_r(z_{\mathrm{eq}})\), there exists a measurable function \( w : [t_{\rm in}, t_{\rm in}+\rho] \to \mathcal{U} \) such that  
\begin{linenomath}
\begin{align}
|w(t) - w_{\mathrm{eq}}| \leq \rho, \; \forall t \in [t_{\rm in}, t_{\rm in}+\rho],
\end{align}
\end{linenomath}
and
\begin{linenomath}
\begin{align}
\begin{cases}
\dot{\boldsymbol{z}} = \boldsymbol{\mathcal{F}}(z, w),\\
z(t_{\rm in}) = z_{\rm in}, \quad
 z(t_{\rm in}+\rho)=z_{\rm fin}.
\end{cases}
\end{align}
\end{linenomath}
\end{definition}
\noindent
The condition STLC ensures that a dynamic system is controllable within a small neighbourhood of the equilibrium point
and in an arbitrarily short amount of time, weakening the more general concept of global controllability, and  providing great accuracy.

\medskip
\noindent
From now on, let us fix the form of the control system in Equation \eqref{eq: control system} as
\begin{equation}
\label{eq:affine}
\dot{\boldsymbol{z}}=\boldsymbol{\mathcal{F}}(z,w)=\boldsymbol{f}_0(z)+\sum_{i=1}^{m}\boldsymbol{f}_{i}(z)w_i,
\end{equation}
where the vector field $\boldsymbol{\mathcal{F}}$ is non-linear in the state variables $z$ and affine in the control function $w$, being $w=(w_{1}, w_{2},\dots,w_{m})$, with non-zero drift term $\boldsymbol{f}_0$ and with vector fields $\boldsymbol{f}_j:\mathcal{Z}\rightarrow\mathbb{R}^{n}$ being non-linear in $z$, for $i=0,\dots,m$. 

\begin{definition}[Equilibria of affine control systems]
\label{def_equilibria_aff}
From here on, we consider equilibrium points of the form
\begin{linenomath}
\begin{align}
(z_{\rm eq},w_{\rm eq})=\left(\bar{z}_{\rm eq},0\right)
\label{eq_eq_affine}
\end{align}
\end{linenomath}
for the control system of the form of Equation \eqref{eq:affine}, where $\bar{z}_{\rm eq}$ is a zero of the drift term, as in Section \ref{sec_state}, i.e., an equilibrium of the non-controlled remodelling law.  Such a definition is introduced and employed in \cite{DiStefano2025a} (see, in particular, Section 7).
\end{definition}
\noindent
In this context, we need the following definition of Lie algebra.

\begin{definition}[Lie Algebra \cite{Coron2007}]
Consider a control system in the form of Equation \eqref{eq:affine}, an equilibrium point $(z_{\rm eq},w_{\rm eq})=\left(\bar{z}_{\rm eq},0\right)$ of that system, as defined in Equation \eqref{eq_eq_affine}, and the related vector fields $\boldsymbol{f}_0, \ldots, \boldsymbol{f}_m$. The Lie algebra $\mathcal{A}$ formed by the vector fields $\boldsymbol{f}_0, \ldots, \boldsymbol{f}_m$ with respect to $(z_{\rm eq},w_{\rm eq})=\left(\bar{z}_{\rm eq},0\right)$ is defined as
\begin{linenomath}
\begin{align}
\mathcal{A}\left(\bar{z}_{\rm eq}\right)=\left\lbrace\boldsymbol{g}\left(\bar{z}_{\rm eq}\right)\in \mathbb{R}^n \,\, | \,\, \boldsymbol{g} \in \mathrm{Lie}(\boldsymbol{f}_0,\boldsymbol{f}_1,\ldots, \boldsymbol{f}_m)\right\rbrace,
\end{align}
\end{linenomath}
being $\mathrm{Lie}(\boldsymbol{f}_0,\boldsymbol{f}_1,\ldots, \boldsymbol{f}_m)$ the Lie algebra generated by the vector fields $\boldsymbol{f}_0,\boldsymbol{f}_1,\ldots, \boldsymbol{f}_m$.
\end{definition}
\noindent
To prove controllability of systems reported in Equation \eqref{eq:affine}, we require the following property to hold true.
\begin{definition}[Lie Algebra Rank Condition \cite{Coron2007}]
\label{LARC}
The control system in Equation \eqref{eq:affine} satisfies the Lie Algebra Rank Condition (LARC) at an equilibrium point $(z_{\mathrm{eq}},w_{\mathrm{eq}})$ if
\begin{equation}
\mathcal{A}(z_{\mathrm{eq}},w_{\mathrm{eq}}) = \mathbb{R}^n.
\end{equation}
\end{definition}
\begin{proposition}
\label{propositionLARC}
If the vector fields $\boldsymbol{f}_0,\boldsymbol{f}_1,\ldots, \boldsymbol{f}_m$ in Equation \eqref{eq:affine} are analytic and if they do not satisfy the LARC at an equilibrium point, then the control in Equation system \eqref{eq:affine} is not STLC at that point. In other words, for non-linear affine control systems with drift and analytic vector fields $\boldsymbol{f}_0,\boldsymbol{f}_1,\ldots, \boldsymbol{f}_m$, the LARC condition is a necessary condition for STLC.
\end{proposition}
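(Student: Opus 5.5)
The plan is to argue by contraposition. Assume the analytic vector fields $\boldsymbol f_0,\dots,\boldsymbol f_m$ fail the LARC at the equilibrium $(\bar z_{\rm eq},0)$, so that $k:=\dim\mathcal A(\bar z_{\rm eq})<n$. We then show that the set of states reachable from $\bar z_{\rm eq}$ by admissible controls has empty interior in $\mathbb R^n$. Every ball $B_r(\bar z_{\rm eq})$ would then contain a target $z_{\rm fin}$ that cannot be reached from $z_{\rm in}=\bar z_{\rm eq}$, for any $r>0$ and any time horizon. This contradicts Definition~\ref{STLC_def}.

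\textbf{Step 1: orbits of the analytic family.} Consider the family $\mathcal F=\{\boldsymbol f_0+\sum_i w_i\boldsymbol f_i : w\in\mathcal U\}$ and its orbit $\mathcal O_{\bar z_{\rm eq}}$. The orbit is the set of points obtained from $\bar z_{\rm eq}$ by finite compositions of forward and backward flows of fields in $\mathcal F$. Since all fields are analytic, Nagano's theorem (equivalently, the analytic case of the Sussmann orbit theorem) states two things:
\begin{itemize}
\item $\mathcal O_{\bar z_{\rm eq}}$ is an immersed analytic submanifold of $\mathbb R^n$;
\item its tangent space at every point $z$ equals the evaluation $\{\boldsymbol g(z):\boldsymbol g\in\mathrm{Lie}(\boldsymbol f_0,\dots,\boldsymbol f_m)\}$.
\end{itemize}
The Lie algebra generated by $\mathcal F$ coincides with $\mathrm{Lie}(\boldsymbol f_0,\dots,\boldsymbol f_m)$, because the controls enter affinely. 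Hence $\dim\mathcal O_{\bar z_{\rm eq}}=k<n$.

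\textbf{Step 2: trajectories stay in the orbit.} Let $w$ be a measurable control and $z(\cdot)$ the corresponding solution starting at $\bar z_{\rm eq}$. We must show that $z(t)\in\mathcal O_{\bar z_{\rm eq}}$ for all $t$. For piecewise-constant controls this holds directly from the definition of the orbit. For general measurable controls we argue as follows. The field $t\mapsto\boldsymbol{\mathcal F}(z,w(t))$ is, for every $t$, a combination of the fields $\boldsymbol f_j$, and each $\boldsymbol f_j$ is tangent to the orbit by Step 1. Solutions of Carath\'eodory ODEs whose right-hand side is tangent to an integral manifold of an involutive analytic distribution remain in that leaf. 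An alternative route is to approximate $w$ by piecewise-constant controls and pass to the limit using the local structure of the leaf, which in suitable charts is a $k$-dimensional slice. Consequently, the reachable set from $\bar z_{\rm eq}$ is contained in a $k$-dimensional immersed submanifold.

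\textbf{Step 3: conclusion.} An immersed submanifold of dimension $k<n$ has empty interior in $\mathbb R^n$; in fact it has Lebesgue measure zero, being a countable union of embedded pieces. So for every $r>0$ there is a point $z_{\rm fin}\in B_r(\bar z_{\rm eq})\setminus\mathcal O_{\bar z_{\rm eq}}$. Taking $z_{\rm in}=\bar z_{\rm eq}$, no control, bounded or not, steers the system to $z_{\rm fin}$. Hence the system is not STLC at $(\bar z_{\rm eq},0)$.

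\textbf{Main obstacle.} The delicate point is Step 1, namely that the orbit dimension equals the rank of the Lie algebra at $\bar z_{\rm eq}$. This is exactly where analyticity is needed. For merely smooth fields the orbit may have dimension larger than $\dim\mathcal A(\bar z_{\rm eq})$, since brackets can vanish at a point while the flows still spread out, and then the statement fails. I would therefore invoke Nagano's theorem as a black box (see \cite{AgrachevBook,Coron2007,LibroJurdjevic}) rather than reprove it. Its key mechanism is that, in the analytic case, $\boldsymbol g(e^{tX}z)$ is determined by the Taylor series $\sum_k \frac{t^k}{k!}\mathrm{ad}_X^k\boldsymbol g(z)$. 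Each term of this series lies in the Lie algebra evaluation at $z$, so the rank of that evaluation is constant along the orbit.
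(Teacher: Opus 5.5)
The paper gives no proof of this proposition; it is quoted as a known result from the literature (cf.\ \cite{Coron2007}). Your contrapositive argument is the standard proof of that result, and it is correct: Nagano's theorem yields an orbit of dimension $\dim\mathcal A(\bar z_{\rm eq})<n$ through the equilibrium, every trajectory issuing from $\bar z_{\rm eq}$ stays in it, and such an orbit has empty interior.

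\textbf{Step~2 needs a better justification.} You appeal to a leaf of an ``involutive analytic distribution'', but this is not quite the right reason, for two separate reasons. First, $z\mapsto\{\boldsymbol g(z):\boldsymbol g\in\mathrm{Lie}(\boldsymbol f_0,\dots,\boldsymbol f_m)\}$ need not have constant rank near $\bar z_{\rm eq}$, so Frobenius does not apply directly. Second, orbits need not be closed in any case. The second point is also why your piecewise-constant approximation route must first localise to a plaque that is closed in a neighbourhood before passing to the limit, and then iterate in time.

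A clean argument runs as follows:
\begin{itemize}
\item restrict the tangent fields $\boldsymbol f_j$ to the orbit, where they are analytic;
\item solve the Carath\'eodory equation on the orbit;
\item identify that solution with $z(\cdot)$ by uniqueness in $\mathbb R^n$;
\item exclude a finite-time exit from the orbit by restarting at the limit point, running backward, and using that orbits partition the state space.
\end{itemize}

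\textbf{Your heuristic needs the flow's differential.} The series $\sum_k\frac{t^k}{k!}\mathrm{ad}_X^k\boldsymbol g(z)$ represents, up to the bracket sign convention, the pulled-back vector $\bigl(De^{tX}(z)\bigr)^{-1}\boldsymbol g(e^{tX}z)$, not $\boldsymbol g(e^{tX}z)$ itself. It is the differential of the flow that carries the Lie-algebra evaluation at $z$ onto the one at $e^{tX}z$, and this is what makes the rank constant along the orbit.
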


\noindent
The Kalman criterion will be employed repeatedly throughout the paper to analyse the controllability of linearized remodelling systems. In the specific case in which the control system is linear, i.e. $\dot{\boldsymbol{z}}=\boldsymbol{\mathcal{F}}(z,w)=\boldsymbol{A}_{\mathrm{lin}}z+\boldsymbol{B}_{\mathrm{lin}}w$, its controllability can be proven using the Kalman criterion that we state here
\begin{proposition}[Kalman Criterion]
    The  linear control system $\dot{\boldsymbol{z}}=\boldsymbol{A}_{\mathrm{lin}}z+\boldsymbol{B}_{\mathrm{lin}}w$ is
controllable on $[t_{\rm in}, t_{\rm fin}]$ if and only if
\begin{linenomath}
\begin{align}
\mathrm{span} \{\boldsymbol{A}_{\mathrm{lin}}^i\boldsymbol{B}_{\mathrm{lin}}w; w\in  \mathbb{R}^m, i=0,\ldots n-1
\} = \mathbb{R}^n.
\end{align}
\end{linenomath}
\end{proposition}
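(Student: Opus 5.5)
We prove the two implications separately. Throughout, we use the variation-of-constants formula for $\dot{\boldsymbol z}=\boldsymbol A_{\mathrm{lin}}z+\boldsymbol B_{\mathrm{lin}}w$:
\begin{linenomath}
\begin{align*}
z(t_{\rm fin})=\mathrm e^{(t_{\rm fin}-t_{\rm in})\boldsymbol A_{\mathrm{lin}}}z_{\rm in}+\int_{t_{\rm in}}^{t_{\rm fin}}\mathrm e^{(t_{\rm fin}-s)\boldsymbol A_{\mathrm{lin}}}\boldsymbol B_{\mathrm{lin}}w(s)\,\mathrm ds .
\end{align*}
\end{linenomath}
It shows that controllability on $[t_{\rm in},t_{\rm fin}]$ is equivalent to surjectivity of the linear map $w\mapsto\int_{t_{\rm in}}^{t_{\rm fin}}\mathrm e^{(t_{\rm fin}-s)\boldsymbol A_{\mathrm{lin}}}\boldsymbol B_{\mathrm{lin}}w(s)\,\mathrm ds$ from $L^2(\mathscr I;\mathbb R^m)$ (or $L^\infty$) onto $\mathbb R^n$. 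Here we take $\mathcal U=\mathbb R^m$, as is implicit for linear systems.

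\emph{Necessity.} Suppose the span of the vectors $\boldsymbol A_{\mathrm{lin}}^i\boldsymbol B_{\mathrm{lin}}w$, $i=0,\dots,n-1$, is a proper subspace. Then there is a nonzero vector $\xi$ with $\xi^{\rm T}\boldsymbol A_{\mathrm{lin}}^i\boldsymbol B_{\mathrm{lin}}=0$ for $i\le n-1$. By the Cayley--Hamilton theorem, $\boldsymbol A_{\mathrm{lin}}^k$ is a linear combination of $\boldsymbol I,\dots,\boldsymbol A_{\mathrm{lin}}^{n-1}$ for every $k\ge n$, so the orthogonality extends to all $k$. Expanding the exponential as a power series then gives $\xi^{\rm T}\mathrm e^{\tau\boldsymbol A_{\mathrm{lin}}}\boldsymbol B_{\mathrm{lin}}=0$ for all $\tau$. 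Consequently $\xi^{\rm T}z(t_{\rm fin})=\xi^{\rm T}\mathrm e^{(t_{\rm fin}-t_{\rm in})\boldsymbol A_{\mathrm{lin}}}z_{\rm in}$ for every control. Any target violating this affine relation is therefore unreachable, and the system is not controllable.

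\emph{Sufficiency.} Introduce the controllability Gramian
\begin{linenomath}
\begin{align*}
\boldsymbol G:=\int_{t_{\rm in}}^{t_{\rm fin}}\mathrm e^{(t_{\rm fin}-s)\boldsymbol A_{\mathrm{lin}}}\boldsymbol B_{\mathrm{lin}}\boldsymbol B_{\mathrm{lin}}^{\rm T}\mathrm e^{(t_{\rm fin}-s)\boldsymbol A_{\mathrm{lin}}^{\rm T}}\,\mathrm ds .
\end{align*}
\end{linenomath}
This matrix is symmetric and positive semidefinite. The key step is to show that $\boldsymbol G$ is invertible under the rank condition. Suppose $\xi^{\rm T}\boldsymbol G\xi=0$. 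Then $\int_{t_{\rm in}}^{t_{\rm fin}}|\boldsymbol B_{\mathrm{lin}}^{\rm T}\mathrm e^{(t_{\rm fin}-s)\boldsymbol A_{\mathrm{lin}}^{\rm T}}\xi|^2\,\mathrm ds=0$, and by continuity the function $\tau\mapsto\xi^{\rm T}\mathrm e^{\tau\boldsymbol A_{\mathrm{lin}}}\boldsymbol B_{\mathrm{lin}}$ vanishes on $[0,t_{\rm fin}-t_{\rm in}]$. Differentiating $i$ times at $\tau=0$ gives $\xi^{\rm T}\boldsymbol A_{\mathrm{lin}}^i\boldsymbol B_{\mathrm{lin}}=0$ for $i=0,\dots,n-1$, hence $\xi=0$ by the rank condition. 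With $\boldsymbol G$ invertible, we define the explicit control
\begin{linenomath}
\begin{align*}
w(s)=\boldsymbol B_{\mathrm{lin}}^{\rm T}\mathrm e^{(t_{\rm fin}-s)\boldsymbol A_{\mathrm{lin}}^{\rm T}}\boldsymbol G^{-1}\bigl(z_{\rm fin}-\mathrm e^{(t_{\rm fin}-t_{\rm in})\boldsymbol A_{\mathrm{lin}}}z_{\rm in}\bigr).
\end{align*}
\end{linenomath}
This control is continuous, hence measurable. Substituting it into the variation-of-constants formula yields exactly $z(t_{\rm fin})=z_{\rm fin}$, which proves controllability on any interval of positive length.

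The main obstacle is the invertibility of the Gramian, that is, passing from the vanishing of the integral to the algebraic condition on the finitely many vectors $\boldsymbol A_{\mathrm{lin}}^i\boldsymbol B_{\mathrm{lin}}w$. This requires the analyticity of $\tau\mapsto\mathrm e^{\tau\boldsymbol A_{\mathrm{lin}}}$, used through differentiation at $\tau=0$, together with Cayley--Hamilton, which justifies truncating at $n-1$. Everything else is direct computation.
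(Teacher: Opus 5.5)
Your proof is correct and is the standard one: Cayley--Hamilton and the exponential series for necessity, and positive definiteness of the controllability Gramian together with the explicit steering control for sufficiency. The paper gives no proof of its own; it quotes the criterion as a classical result from Coron's book, where essentially this Gramian argument appears.

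Your explicit choice $\mathcal U=\mathbb R^m$ is in fact needed for the sufficiency direction. The paper's definition of a control system allows any open control set, and with bounded controls the rank condition alone does not guarantee controllability: for example, $\dot z=z+w$ with $|w|<1$ cannot be steered from $z_{\rm in}=2$ to $0$.
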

\noindent
Therefore introducing the \emph{Kalman matrix} $\mathsf{C}_{\mathrm{Kal}}$, defined as
\begin{linenomath}
\begin{align}
\mathsf{C}_{\mathrm{Kal}}=\begin{bmatrix}
\boldsymbol{B}_{\mathrm{lin}}|\boldsymbol{A}_{\mathrm{lin}}\,\boldsymbol{B}_{\mathrm{lin}}|\boldsymbol{A}_{\mathrm{lin}}^{2}\,\boldsymbol{B}_{\mathrm{lin}}|\dots|\boldsymbol{A}_{\mathrm{lin}}^{n-1}\boldsymbol{B}_{\mathrm{lin}}
\end{bmatrix},
\label{contr_matrix}
\end{align}
\end{linenomath}
the linear control system $\dot{\boldsymbol{z}}=\boldsymbol{A}_{\mathrm{lin}}z+\boldsymbol{B}_{\mathrm{lin}}w$ is
controllable if and only if the rank of the matrix $\mathsf{C}_{\mathrm{Kal}}$ is equal to $n$.
As a consequence of this we have the following classical theorem concerning the STLC of nonlinear control systems \cite{Coron2007}.
\begin{theorem}\label{thm:lin_STLC}
    Let $(z_{\mathrm{eq}},w_{\mathrm{eq}})$ be an equilibrium point of the control system \eqref{eq: control system}. Let us assume that the linearized control system of the control system \eqref{eq: control system} at $(z_{\mathrm{eq}},w_{\mathrm{eq}})$ is controllable. Then the nonlinear control system is
small-time locally controllable at $(z_{\mathrm{eq}},w_{\mathrm{eq}})$.
\end{theorem}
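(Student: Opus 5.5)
We prove Theorem~\ref{thm:lin_STLC} by combining the surjectivity of an endpoint map, read off from the Kalman criterion, with the inverse function theorem. Translating variables, we take $z_{\mathrm{eq}}=0$ and $w_{\mathrm{eq}}=0$, so that $\boldsymbol{\mathcal F}(0,0)=\boldsymbol 0$. We set
\begin{linenomath}
\begin{align*}
\boldsymbol A_{\mathrm{lin}}:=\partial_z\boldsymbol{\mathcal F}(0,0),\qquad \boldsymbol B_{\mathrm{lin}}:=\partial_w\boldsymbol{\mathcal F}(0,0),
\end{align*}
\end{linenomath}
and assume that $\boldsymbol{\mathcal F}$ is at least $C^1$ near the origin. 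Fix $\rho>0$ and the horizon $T:=\rho$, taking $t_{\rm in}=0$. The endpoint map is
\begin{linenomath}
\begin{align*}
\Phi:(z_{\rm in},w)\in\mathbb R^n\times L^\infty(0,T;\mathbb R^m)\longmapsto z(T;z_{\rm in},w),
\end{align*}
\end{linenomath}
where $z(\cdot\,;z_{\rm in},w)$ solves $\dot{\boldsymbol z}=\boldsymbol{\mathcal F}(z,w)$ with $z(0)=z_{\rm in}$. Since $(0,0)$ is an equilibrium, this solution exists on $[0,T]$ for data in a neighbourhood of $(0,0)$, and $\Phi(0,0)=0$. Standard differentiable dependence of ODE solutions on initial data and on $L^\infty$ controls gives that $\Phi$ is $C^1$ near $(0,0)$. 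Its differential there is the endpoint map of the linearized system:
\begin{linenomath}
\begin{align*}
D\Phi(0,0)[\zeta,v]=\mathrm e^{T\boldsymbol A_{\mathrm{lin}}}\zeta+\int_0^T \mathrm e^{(T-s)\boldsymbol A_{\mathrm{lin}}}\boldsymbol B_{\mathrm{lin}}v(s)\,\mathrm ds .
\end{align*}
\end{linenomath}

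By hypothesis the linearized system is controllable. By the Kalman criterion this is a property of the pair $(\boldsymbol A_{\mathrm{lin}},\boldsymbol B_{\mathrm{lin}})$ alone and holds on every interval, in particular on $[0,T]$. Hence the Gramian
\begin{linenomath}
\begin{align*}
\mathsf G_T=\int_0^T \mathrm e^{(T-s)\boldsymbol A_{\mathrm{lin}}}\boldsymbol B_{\mathrm{lin}}\boldsymbol B_{\mathrm{lin}}^{\rm T}\mathrm e^{(T-s)\boldsymbol A_{\mathrm{lin}}^{\rm T}}\mathrm ds
\end{align*}
\end{linenomath}
is invertible. To see this, suppose $\boldsymbol y^{\rm T}\mathsf G_T\boldsymbol y=0$. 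Then $\boldsymbol y^{\rm T}\mathrm e^{(T-s)\boldsymbol A_{\mathrm{lin}}}\boldsymbol B_{\mathrm{lin}}\equiv\boldsymbol 0$ on $[0,T]$. Differentiating repeatedly at $s=T$ gives $\boldsymbol y^{\rm T}\boldsymbol A_{\mathrm{lin}}^i\boldsymbol B_{\mathrm{lin}}=\boldsymbol 0$ for all $i$, so $\boldsymbol y$ is orthogonal to the range of $\mathsf C_{\mathrm{Kal}}$, which is all of $\mathbb R^n$; thus $\boldsymbol y=\boldsymbol 0$.

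We restrict $\Phi$ to a finite-dimensional family of controls. Choose vectors $c_1,\dots,c_n\in\mathbb R^n$ and set
\begin{linenomath}
\begin{align*}
v_j(s):=\boldsymbol B_{\mathrm{lin}}^{\rm T}\mathrm e^{(T-s)\boldsymbol A_{\mathrm{lin}}^{\rm T}}c_j .
\end{align*}
\end{linenomath}
Then $\int_0^T \mathrm e^{(T-s)\boldsymbol A_{\mathrm{lin}}}\boldsymbol B_{\mathrm{lin}}v_j(s)\,\mathrm ds=\mathsf G_Tc_j$. Choosing $c_j=\mathsf G_T^{-1}\boldsymbol e_j$, the vectors $\mathsf G_Tc_j$ are the canonical basis. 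Define
\begin{linenomath}
\begin{align*}
\Psi(z_{\rm in},\theta):=\Phi\Bigl(z_{\rm in},\textstyle\sum_{j=1}^n\theta_jv_j\Bigr),\qquad \theta\in\mathbb R^n .
\end{align*}
\end{linenomath}
This map is $C^1$ near the origin, and $\partial_\theta\Psi(0,0)$ is the identity.

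By the implicit function theorem there are $r>0$ and a $C^1$ map $\theta(z_{\rm in},z_{\rm fin})$, defined for $z_{\rm in},z_{\rm fin}\in B_r(0)$, such that $\Psi(z_{\rm in},\theta(z_{\rm in},z_{\rm fin}))=z_{\rm fin}$ and $\theta(0,0)=0$. By continuity of $\theta$, shrinking $r$ makes $|\theta|$ small. Then the control $w=\sum_j\theta_jv_j$ satisfies $|w(t)|\le\rho$ for all $t\in[0,\rho]$, takes values in the open set $\mathcal U$, and keeps the trajectory inside $\mathcal Z$. This is exactly Definition~\ref{STLC_def}, with time horizon $\rho$ and control bound $\rho$, so the system is STLC at $(z_{\mathrm{eq}},w_{\mathrm{eq}})$.

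The main obstacle is the regularity of the endpoint map: we need $C^1$ dependence of the trajectory on $(z_{\rm in},w)$ near the equilibrium, with existence up to time $T$ guaranteed, together with the identification of $D\Phi(0,0)$ with the linearized endpoint map. This follows from Gronwall estimates applied to the variational equation, and it uses the assumed regularity of $\boldsymbol{\mathcal F}$ from Definition~\ref{def_contolled_problem}. Reducing to the finite-dimensional family of controls $\sum_j\theta_jv_j$ is what lets us invoke the ordinary finite-dimensional implicit function theorem instead of working in a Banach space of controls.
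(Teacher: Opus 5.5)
Your proof is correct. The paper gives no argument of its own here and simply quotes this as the classical linear test from Coron's book, whose proof is the one you give: the endpoint map $(z_{\rm in},w)\mapsto z(T)$ is $C^1$ with differential equal to the linearized endpoint map, the Kalman condition makes that differential onto, and an inverse/implicit function theorem on a finite-dimensional slice of controls yields STLC with horizon and control bound $\rho$. Your Gramian controls $v_j$ are just an explicit right inverse of that differential; since they are smooth in $(t,\theta)$, classical parameter dependence of ODEs would even spare you the $L^\infty$-differentiability that you flag as the main obstacle.
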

\section{Controllability results for the remodelling problem}
\label{sec_under}
The state-space representation derived in Section \ref{sec_state} provides a five-dimensional non-linear control system whose drift vector field is obtained from the explicit inversion of the state matrix. Although the resulting equations are completely determined, the analytical expressions of the corresponding vector fields and of their iterated Lie brackets are algebraically cumbersome and offer limited physical insight. For this reason, rather than pursuing a fully general controllability analysis of the three-dimensional model, we investigate controllability properties on suitably selected invariant classes of infinitesimal strain fields. These restrictions preserve the essential non-linear coupling between remodelling stretches and principal directions while leading to reduced systems that remain analytically tractable. 

Throughout this section, the remodelling state is described by five independent state variables, whereas only four control inputs are introduced. Consequently, all the controlled remodelling systems considered below are under-actuated in the sense of Geometric Control Theory. The controllability results obtained in the following sections are therefore achieved despite the number of controls being strictly smaller than the dimension of the state space.

\paragraph{Controllability of a purely shear deformation.} As a first benchmark problem, we consider a deformation state generated by a single shear component of the infinitesimal strain tensor. More precisely, we prescribe 
\begin{linenomath}
\begin{align}
\boldsymbol{\varepsilon} = \mathrm{dev}(\boldsymbol{\varepsilon})= \begin{pmatrix} 0 & 0 & \varepsilon_{13}\\ 0 & 0 & 0\\ \varepsilon_{13} & 0 & 0 \end{pmatrix},
\end{align}
\end{linenomath}
where $\varepsilon_{13}\neq 0$ denotes the infinitesimal shear strain in the $(X_1,X_3)$-plane and, therefore, the considered deformation corresponds to a pure shear state acting in the $(X_1,X_3)$-plane, with no normal deformation along the coordinate axes. By applying Theorem \ref{thm_equilibrium}, the equilibrium remodelling configuration is 
\begin{linenomath}
\begin{align}\label{equilibrium13}
\bar{z}_{\rm eq} = \left(p_{\rm 1,eq},p_{\rm 2,eq},\bar\alpha_{\rm eq},\bar\beta_{\rm eq},\bar\gamma_{\rm eq}\right)^{\rm T}= \left( \varepsilon_{13},0, \pi, \pi/4,\pi \right)^{\rm T},
\end{align}
\end{linenomath}
and, since $\varepsilon_{13}\neq0$, the three principal remodelling stretches $\Lambda_1=\varepsilon_{13}, \Lambda_2=0$ and $\Lambda_3=-\varepsilon_{13}$ are pairwise distinct. Consequently, the equilibrium is regular and belongs to the open subset in which the state matrix $\boldsymbol M$ is invertible. To investigate controllability, we introduce four independent control inputs acting directly on the state variables $p_1$, $p_2$, $\beta$ and $\gamma$. The resulting affine control system reads 
\begin{linenomath}
\begin{align}
\boldsymbol M(z)\dot{\boldsymbol z}
=
\boldsymbol b(z;\varepsilon_{\rm dev})+w_{1}\boldsymbol{e}_{1}+w_{2}\boldsymbol{e}_{2}+w_{4}\boldsymbol{e}_{4}+w_{5}\boldsymbol{e}_{5},
\label{eq_state_space_remodelling_ben}
\end{align}
\end{linenomath}
where $\{\boldsymbol{e}_{i}\}_{i=1,...,5}$ is the canonical basis of $\mathbb{R}^{5}$. In its explicit form, the controlled system in Equation \eqref{eq_state_space_remodelling_ben} reads
\begin{linenomath}
\begin{align}
\dot{\boldsymbol{z}} = \boldsymbol{f}_0(z;\varepsilon_{13}) +w_{1}\boldsymbol{f}_{1}(z)+w_{2}\boldsymbol{f}_{2}(z)+w_{4}\boldsymbol{f}_{4}(z)+w_{5}\boldsymbol{f}_{5}(z),
\label{eq_state_space_remodelling_ben2}
\end{align}
\end{linenomath}
where $\boldsymbol{f}_0$ is the drift term in Equation \eqref{eq_explicit} and $\boldsymbol{f}_{j}(z)=\boldsymbol M^{-1}(z)\boldsymbol{e}_{j}$, $j=1,2,4,5$. Since the state space has dimension five and only four control inputs are employed, the system is under-actuated. We therefore analyse whether the non-linear coupling of the remodelling law is sufficient to recover controllability despite the lack of direct actuation of all state variables. To establish small-time local controllability, we invoke Theorem~\ref{thm:lin_STLC}. The Jacobian matrices of the linearized control system at the equilibrium point $(z_{\rm eq},w_{\rm eq})$ are given by 
\begin{linenomath}
\begin{align} 
\boldsymbol{A}_{\mathrm{lin}} = \nabla_z\boldsymbol{f}_0(z_{\rm eq},w_{\rm eq}) = \begin{pmatrix} -\dfrac{\mu}{b}&0 & 0 & 0 &0\\ 0&0&-\dfrac{\varepsilon_{13}\mu}{\sqrt{2}b}&0&-\dfrac{\varepsilon_{13}\mu}{b}\\ 0&0&-\dfrac{\mu}{b}&0&0\\ 0&-\dfrac{\mu}{2b\varepsilon_{13}}&0&0&0\\ 0&0&\dfrac{\mu}{\sqrt{2}b}&-\dfrac{2\mu}{b}&0 \end{pmatrix}, \end{align} 
\end{linenomath} 
and 
\begin{linenomath} 
\begin{align} 
\boldsymbol{B}_{\mathrm{lin}} = \nabla_w\boldsymbol{\mathcal F}(z_{\rm eq},w_{\rm eq}) = \begin{pmatrix} \boldsymbol{f}_{1}(z_{\rm eq}) & \boldsymbol{f}_{2}(z_{\rm eq}) & \boldsymbol{f}_{4}(z_{\rm eq}) & \boldsymbol{f}_{5}(z_{\rm eq}) \end{pmatrix}, \end{align} 
\end{linenomath} 
where 
\begin{linenomath} 
\begin{align} 
\boldsymbol{f}_{1}(z_{\rm eq})=\boldsymbol e_{1}, \quad \boldsymbol{f}_{2}(z_{\rm eq})=\boldsymbol e_{2}, \quad \boldsymbol{f}_{4}(z_{\rm eq})=\dfrac{1}{2\varepsilon_{13}}\boldsymbol e_{4}, \quad \boldsymbol{f}_{5}(z_{\rm eq})=\dfrac{\sqrt{2}}{\varepsilon_{13}}\boldsymbol e_{3} -\dfrac{1}{\varepsilon_{13}}\boldsymbol e_{5}. 
\end{align} 
\end{linenomath} 
To verify the Kalman criterion, it is sufficient to show that the Kalman matrix $\mathsf C_{\mathrm{Kal}}$ has full rank. To this end, we consider the following $5\times5$ sub-matrix of $\mathsf C_{\mathrm{Kal}}$
\begin{linenomath} 
\begin{align} 
\begin{pmatrix} 
\boldsymbol{B}_{\mathrm{lin}} \mid \boldsymbol{A}_{\mathrm{lin}}\boldsymbol{f}_{4}(z_{\rm eq}) \end{pmatrix}, 
\end{align} 
\end{linenomath} 
whose determinant is 
\begin{linenomath} 
\begin{align} 
\det \begin{pmatrix} \boldsymbol{B}_{\mathrm{lin}} \mid \boldsymbol{A}_{\mathrm{lin}}\boldsymbol{f}_{4}(z_{\rm eq}) \end{pmatrix} = \dfrac{\mu}{\sqrt{2}\,b\,(\varepsilon_{13})^{3}}. 
\end{align} 
\end{linenomath} 
Since $\varepsilon_{13}\neq0$ by assumption, the above determinant never vanishes. Therefore, the Kalman matrix has full rank, 
\begin{linenomath} 
\begin{align} 
\mathrm{rank}\,\mathsf C_{\mathrm{Kal}}=5,
\end{align} 
\end{linenomath} 
and the linearized control system is controllable. By Theorem~\ref{thm:lin_STLC}, it follows that the nonlinear remodelling system is small-time locally controllable at the equilibrium point under consideration.
{\paragraph{Numerical simulations.} The purpose of the following simulation is to illustrate, in a fully nonlinear setting, that the remodelling system can be steered in directions that are not directly spanned by any of the control vector fields $\boldsymbol{f}_1,\dots,\boldsymbol{f}_4$ themselves, but only by their Lie brackets. In particular, we show that it is possible to move the state along the $\gamma$ direction, even though none of the admissible controls act directly on $\gamma$. To this end, we numerically integrate the fully nonlinear remodelling equations over the interval $t\in[0,\bar{t}]$, with $\bar{t}=0.01 \mathrm{\,s}$, and set $\varepsilon_{13}=0.5$ and $\mu/b=1/30$. Starting from the equilibrium state $z_{\rm eq}$ in \eqref{equilibrium13} and applying the control
\begin{equation}
\label{eq_bang_contr_w4}
w_4(t)=\begin{cases}
-1 & t\in[0,\bar{t}/2],\\[2pt]
\ \ 1 & t\in[\bar{t}/2,\bar{t}],
\end{cases}
\end{equation}
i.e.\ a bang-bang switch of amplitude two acting only on the fourth control direction. Figure~\ref{fig:remodelling_w4} reports the time evolution of the three remodelling angles $
\alpha,\beta $ and $\gamma$ over the whole interval $[0,\bar{t}]$. Instead, we do not report the trend of the remodelling stretches $p_{1}$ and $p_{2}$ since they remain almost constant and equal to the equilibrium. As predicted by the theoretical estimate
\begin{linenomath} 
\begin{align} 
z(\bar{t})-z_{\rm eq}=\frac{\bar{t}^2}{4}\,[f_0,f_4]+o(\bar{t}^2)
=\begin{pmatrix}0 &0&0&0& \dfrac{\bar{t}^2}{8}\,\varepsilon_{13}\end{pmatrix}^{\rm T}+o(\bar{t}^2),
\label{eq_est}
\end{align} 
\end{linenomath} 
the variables $p_1$, $p_2$, $\alpha$ and $\beta$ undergo only negligible oscillations about their equilibrium values and return to $z_{ \rm eq}$ at $t=\bar{t}$, while $\gamma$ is the only component exhibiting a genuine drift, increasing monotonically away from its equilibrium value over the second half of the interval, in qualitative agreement with the second-order Lie bracket estimate in Equation \eqref{eq_est}. Quantitatively, the agreement between the theoretical prediction and the simulated final state is very good: the relative error between the theoretically predicted value of $\gamma(\bar{t})$ and the one obtained from the numerical integration of the fully nonlinear system is only $3.3\%$, confirming the validity of the asymptotic expansion for small $\bar{t}$ and, more generally, the capability of the control $w_4$ to selectively drive the remodelling state along the direction associated with $\varepsilon_{13}$.

\begin{figure}[htbp]
    \centering
        \includegraphics[scale=0.4]{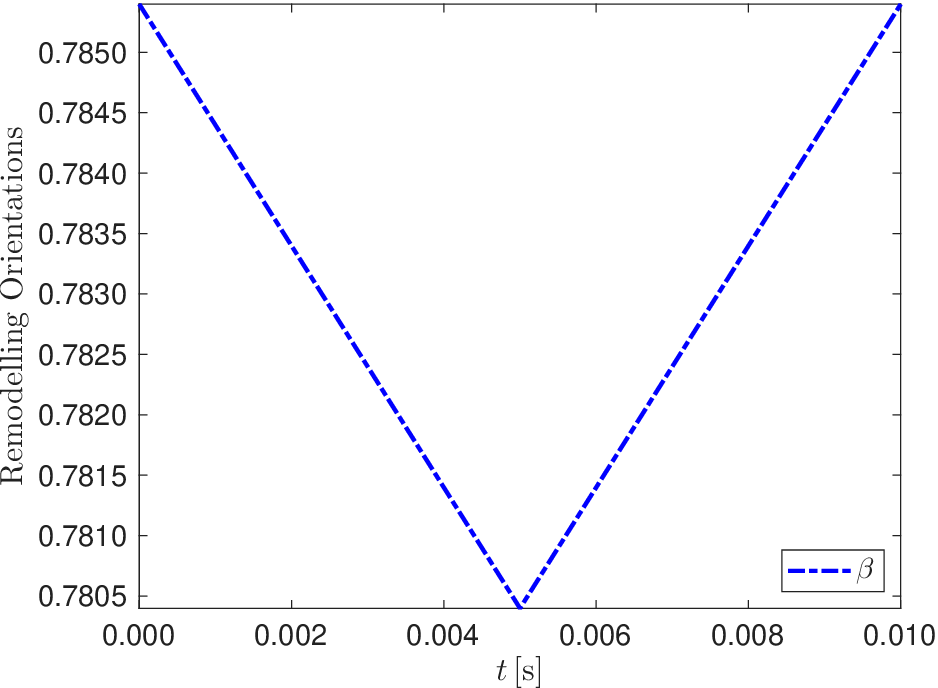}
        \includegraphics[scale=0.4]{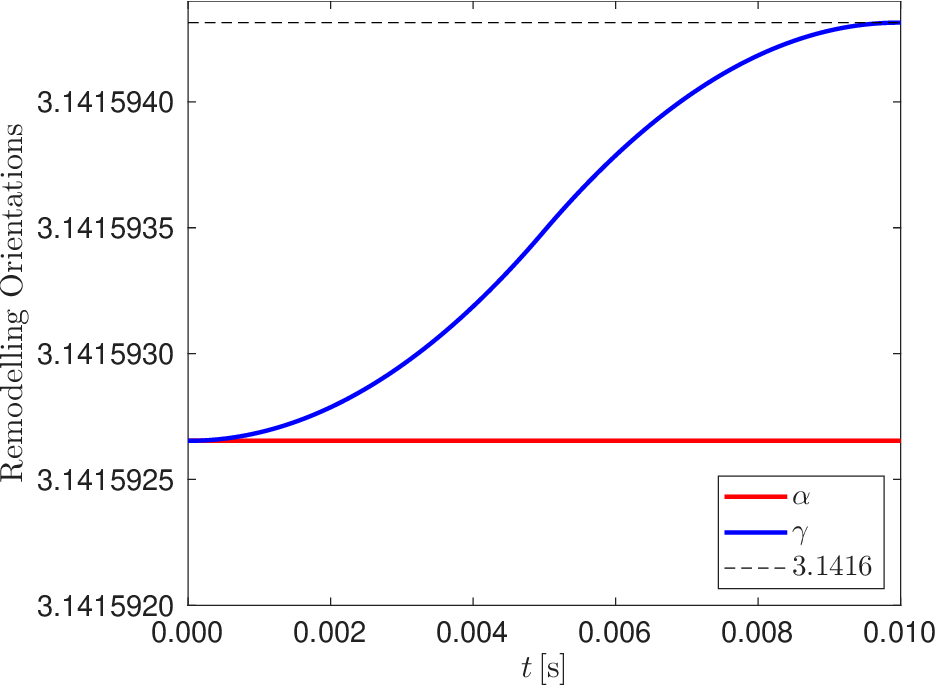}
    \caption{Time evolution of the remodelling variables $(\alpha,\beta,\gamma)$ over $t\in[0,\bar{t}]$, obtained by numerical integration of the fully nonlinear remodelling equations under the control $w_4$, starting from the equilibrium point. The other remodelling variables $p_1$ and $p_2$ exhibit only negligible variations and are therefore omitted. }
    \label{fig:remodelling_w4}
\end{figure}}
\noindent As a further test, we now combine two controls, namely
\begin{linenomath} 
\begin{align} 
w_4(t)=\begin{cases}
-\dfrac{b\varepsilon_{13}}{\sqrt{2}\mu} & t\in[0,\bar{t}/2],\\[6pt]
\ \ \dfrac{b\varepsilon_{13}}{\sqrt{2}\mu} & t\in[\bar{t}/2,\bar{t}],
\end{cases}
\qquad\text{and}\qquad
w_5(t)=\dfrac{\varepsilon_{13}\,\bar{t}}{4\sqrt{2}},
\label{eq_est}
\end{align} 
\end{linenomath} 
and integrate again the fully nonlinear remodelling equations over $t\in[0,\bar{t}]$, starting from the equilibrium state $z_{\rm eq}$, using the same parameters as in the previous simulation. From a theoretical point of view, the combined action of $w_4$ and $w_5$ produces a change in the remodelling state, to leading order, given by
\begin{linenomath} 
\begin{align} 
z(\bar{t})-z_{\rm eq}=\frac{b\varepsilon_{13}}{\sqrt{2}\mu}\,\frac{\bar{t}^2}{4}\,[\boldsymbol{f}_0,\boldsymbol{f}_4]\big|_{z_{\rm eq}}
+\frac{\varepsilon_{13}\bar{t}}{4\sqrt{2}}\,\bar{t}\,\boldsymbol{f}_5+o(\bar{t}^2)
=\begin{pmatrix}0 & 0 & \dfrac{\bar{t}^2}{4} & 0 &0\end{pmatrix}^{\rm T}+o(\bar{t}^2),
\label{eq_est}
\end{align} 
\end{linenomath} 
so that, unlike the previous case, the only variable predicted to undergo a non-negligible change is now $\alpha$, while $p_1$, $p_2$, $\beta$ and $\gamma$ are expected to remain, to this order, at their equilibrium values. Figure~\ref{fig:alpha_w4w5} shows the numerically computed evolution of $\alpha(t)$ under the combined controls $w_4$ and $w_5$, confirming this prediction: $\alpha$ drifts away from its equilibrium value $\pi$ over the interval $[0,\bar{t}]$, while the remaining variables (not shown, as their variation is negligible) stay essentially unchanged. The relative error between the theoretically predicted value of $\alpha(\bar{t})$ and the value obtained from the numerical integration of the fully nonlinear system is $3.4\%$, once again showing very good agreement between the theoretical prediction and the numerical simulations, and confirming that the pair of controls $(w_4,w_5)$ can be used to selectively steer the remodelling state along the $\alpha$ direction.
\begin{figure}[htbp]
    \centering
    \includegraphics[scale=0.4]{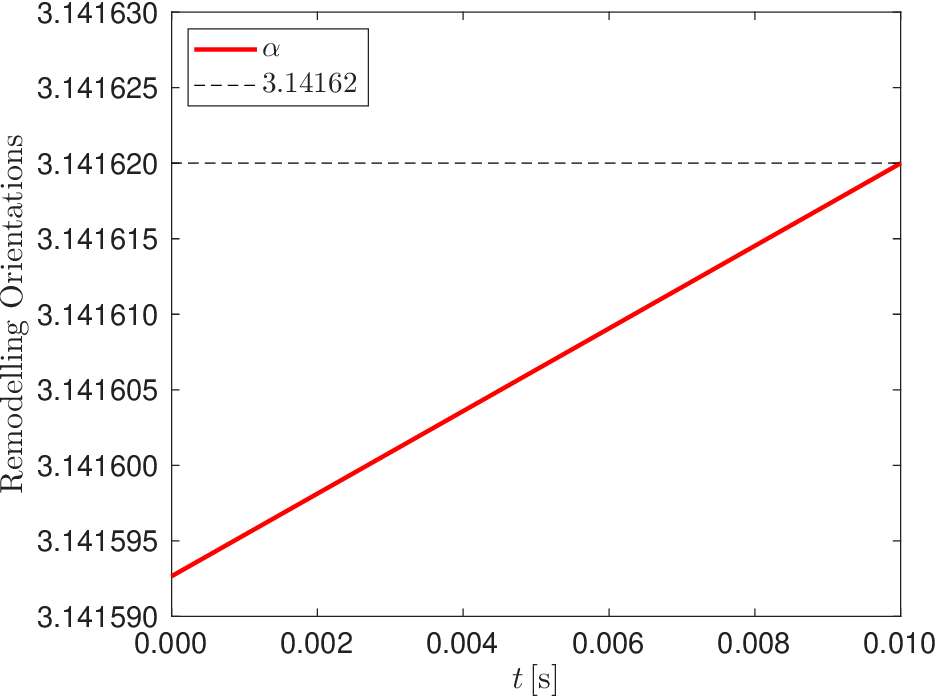}
    \caption{Time evolution of the variable $\alpha(t)$ over $t\in[0,\bar{t}]$, obtained by numerical integration of the fully nonlinear remodelling equations under the combined controls $w_4$ and $w_5$, starting from the equilibrium point. The other remodelling variables $(p_1,p_2,\beta,\gamma)$ exhibit only negligible variations and are therefore omitted.\vspace{-1cm}}
    \label{fig:alpha_w4w5}
\end{figure}

\paragraph{Controllability of a mixed shear-normal deformation.} As a second benchmark case, we consider a deformation state characterised by the simultaneous presence of a shear component in the $(X_1,X_3)$-plane and a normal deformation along the $X_2$-direction, namely
\begin{linenomath}
\begin{align}
\boldsymbol{\varepsilon} =
\begin{pmatrix}
0 & 0 & \varepsilon\\
0 & \varepsilon_{22} & 0\\
\varepsilon & 0 & 0
\end{pmatrix},
\qquad
\mathrm{dev}(\boldsymbol{\varepsilon})=
\begin{pmatrix}
-\dfrac{\varepsilon_{22}}{3} & 0 & \varepsilon\\
0 & \dfrac{2\varepsilon_{22}}{3} & 0\\
\varepsilon & 0 & -\dfrac{\varepsilon_{22}}{3}
\end{pmatrix}.
\end{align}
\end{linenomath}
Here, $\varepsilon$ denotes the infinitesimal shear strain in the $(X_1,X_3)$-plane, whereas $\varepsilon_{22}$ represents the infinitesimal normal strain associated with fibres initially aligned with the $X_2$-direction. The resulting deformation combines a transverse shear distortion with an axial extension or compression along the second coordinate axis.

According to Theorem~\ref{thm_equilibrium}, the corresponding equilibrium configuration is
\begin{linenomath}
\begin{align}
\label{equilibrium22}
\bar z_{\rm eq}
=
\left(
 p_{\rm 1,eq},
 p_{\rm 2,eq},
 \bar\alpha_{\rm eq},
 \bar\beta_{\rm eq},
 \bar\gamma_{\rm eq}
\right)^{\rm T}
=
\left(
 \varepsilon-\dfrac{\varepsilon_{22}}{3},
 \frac{2}{3}\varepsilon_{22},
 \pi,
 \frac{\pi}{4},
 \pi
\right)^{\rm T}.
\end{align}
\end{linenomath}
This configuration belongs to the regular region where the state matrix $\boldsymbol M$ is invertible and the explicit state-space representation is therefore well defined. To investigate controllability, four independent control inputs are introduced and assumed to act directly on the state variables $p_1$, $p_2$, $\beta$, and $\gamma$. The resulting controlled remodelling law is given by Equations~\eqref{eq_state_space_remodelling_ben} and~\eqref{eq_state_space_remodelling_ben2}.

As in the previous benchmark, the Jacobian matrix of the drift evaluated at equilibrium reads
\begin{linenomath}
\begin{align}
\boldsymbol A_{\mathrm{lin}}
=
\nabla_z\boldsymbol f_0(z_{\rm eq},w_{\rm eq})
=
\begin{pmatrix}
-\dfrac{\mu}{b} & 0 & 0 & 0 & 0 \\
0 & 0 & \dfrac{\mu}{\sqrt{2}b}(\varepsilon_{22}-\varepsilon) & 0 & \dfrac{\mu}{b}(\varepsilon_{22}-\varepsilon) \\
0 & 0 & -\dfrac{\mu}{b} & 0 & 0 \\
0 & -\dfrac{\mu}{2b\,\varepsilon} & 0 & 0 & 0 \\
0 & 0 & \dfrac{\mu}{\sqrt{2}b} & -\dfrac{2\varepsilon\mu}{b(\varepsilon-\varepsilon_{22})} & 0
\end{pmatrix}.
\end{align}
\end{linenomath}
The corresponding input matrix is
\begin{linenomath}
\begin{align}
\boldsymbol B_{\mathrm{lin}}
=
\nabla_w\boldsymbol{\mathcal F}(z_{\rm eq},w_{\rm eq})
=
\begin{pmatrix}
\boldsymbol f_1(z_{\rm eq}) &
\boldsymbol f_2(z_{\rm eq}) &
\boldsymbol f_4(z_{\rm eq}) &
\boldsymbol f_5(z_{\rm eq})
\end{pmatrix},
\end{align}
\end{linenomath}
where
\begin{linenomath}
\begin{align}
\boldsymbol f_1(z_{\rm eq}) = \boldsymbol e_1, \quad \boldsymbol f_2(z_{\rm eq}) = \boldsymbol e_2, \quad \boldsymbol f_4(z_{\rm eq}) = \dfrac{1}{2\varepsilon}\,\boldsymbol e_4, \quad \boldsymbol f_5(z_{\rm eq}) = \dfrac{\sqrt{2}}{\varepsilon+\varepsilon_{22}}\,\boldsymbol e_3
-\dfrac{1}{\varepsilon+\varepsilon_{22}}\,\boldsymbol e_5.
\end{align}
\end{linenomath}
To apply Theorem~\ref{thm:lin_STLC}, it is sufficient to verify that the Kalman controllability matrix has full rank. To this end, we consider the following $5\times5$ sub-matrix of $\mathsf C_{\mathrm{Kal}}$:
\begin{linenomath}
\begin{align}
\begin{pmatrix}
\boldsymbol B_{\mathrm{lin}}
\;\big|\;
\boldsymbol A_{\mathrm{lin}}\boldsymbol f_4(z_{\rm eq})
\end{pmatrix}.
\end{align}
\end{linenomath}
A direct calculation yields
\begin{linenomath}
\begin{align}
\det
\begin{pmatrix}
\boldsymbol B_{\mathrm{lin}}
\;\big|\;
\boldsymbol A_{\mathrm{lin}}\boldsymbol f_4(z_{\rm eq})
\end{pmatrix}
=
\dfrac{\mu}
{\sqrt{2}\,b\,\varepsilon\,(\varepsilon^{2}-\varepsilon_{22}^{2})}
\neq 0.
\end{align}
\end{linenomath}
Hence, the Kalman controllability matrix has maximal rank equal to the dimension of the state space, namely $n=5$. It follows that the linearised control system is controllable. Consequently, by Theorem~\ref{thm:lin_STLC}, the nonlinear remodelling system is small-time locally controllable at the equilibrium configuration $\bar z_{\rm eq}$ 

\paragraph{Numerical simulation for mixed shear-normal deformation.}
We consider now the case of a mixed shear-normal deformation, with parameters $\bar{t}=0.01$, $\mu/b=1/30$, $\varepsilon=1/2$ and $\varepsilon_{22}=1/4$, and integrate again the fully nonlinear remodelling equations starting from the equilibrium state $z_{\rm eq}$ in \eqref{equilibrium22}, using the same bang-bang control of formula \eqref{eq_bang_contr_w4} acting only on the fourth control direction. In this case the theoretical prediction, again obtained via the Lie bracket $[f_0,f_4]$, gives
\begin{linenomath}
\begin{align}
z(\tau)=z_{\rm eq}+\frac{\bar{t}^2}{4}\,[\boldsymbol{f}_0,\boldsymbol{f}_4]
=\begin{pmatrix}0& 0 &0 &0& \dfrac{\mu\,\bar{t}^2}{4b}\,(\varepsilon-\varepsilon_{22})\end{pmatrix}^{\rm T}+o(\bar{t}^2),
\end{align}
\end{linenomath}
so that, as before, only the $\gamma$ component is expected to undergo a non-negligible change. Figure~\ref{fig:gamma_caseC} shows the numerically computed evolution of $\gamma(t)$, confirming this prediction. As in the previous cases, the agreement between theory and numerics is very good: the relative error between the theoretically predicted value of $\gamma(\bar{t})$ and the one obtained from the numerical integration of the fully nonlinear system is $3.3\%$.

\begin{figure}[htbp]
    \centering
    \includegraphics[scale=0.4]{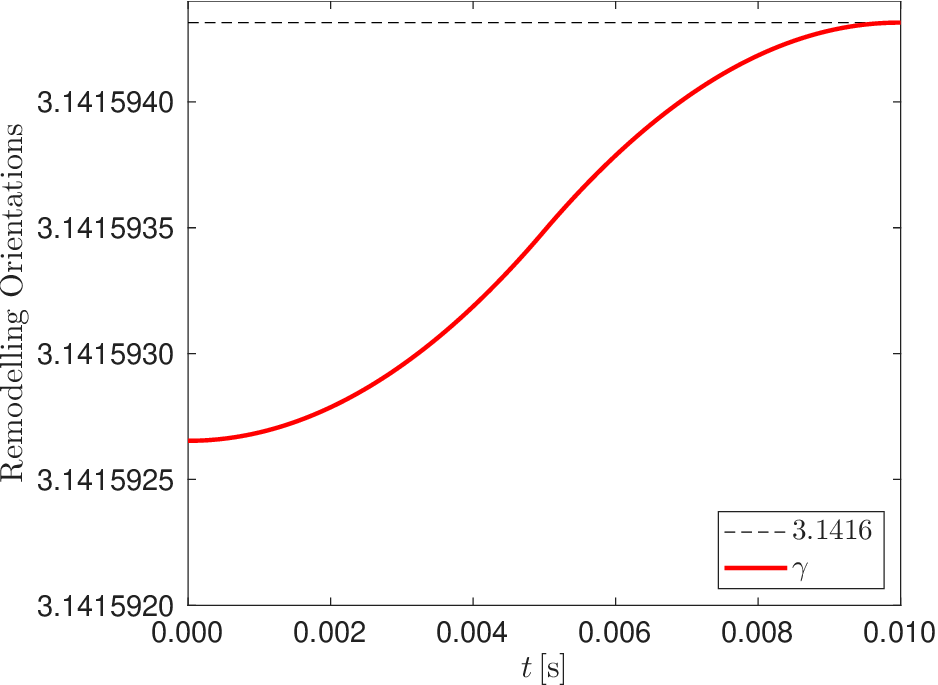}
    \caption{Time evolution of the variable $\gamma(t)$ over $t\in[0,\bar{t}]$, $\bar{t}=0.01$, for the mixed shear-normal deformation case ($\mu/b=1/30$, $\varepsilon=1/2$, $\varepsilon_{22}=1/4$), obtained by numerical integration of the fully nonlinear remodelling equations under the control $w_4$, starting from the equilibrium point. 
    The other remodelling variables exhibit only negligible variations and are therefore omitted.}
    \label{fig:gamma_caseC}
\end{figure}\vspace{-1cm}
\section{Planar reduction and non-controllability results}
\label{sec:planar_state_space_equation}

We now consider an invariant planar restriction of the linearised
deformation and remodelling problem developed in the previous sections
and apply the tools of Geometric Control Theory recalled in
Section~\ref{sec_control}. This choice is consistent with the structure
of the manuscript and allows the reduced problem to serve as an
analytical benchmark for the three-dimensional formulation. We do not
repeat all the formal steps underlying the general theory, since the
present construction follows the same procedure and introduces no
additional conceptual elements. Instead, we collect the main ingredients
required for the subsequent controllability analysis.

As in the three-dimensional setting, remodelling is assumed to be
isochoric. After eliminating the polar remodelling rotation by setting
$\boldsymbol R_{\rm p}=\boldsymbol I$, we restrict the problem to the
$X_1$--$X_2$ plane. Accordingly, the reference configuration
$\mathscr B$ is contained in this plane and its material points are
represented as $X=(X_1,X_2)$. The remodelling state is described by the
principal stretch $p$ and by the angle $\theta$, which determines the
orientation of the associated principal directions within the plane. We
therefore set
\begin{linenomath}
\begin{align}
\boldsymbol\Lambda_{\rm p}(z)
&=
\begin{pmatrix}
p&0&0\\[2pt]
0&-p&0\\[2pt]
0&0&0
\end{pmatrix},
&
\boldsymbol Q_{\rm p}(z)
&=
\begin{pmatrix}
\cos\theta&-\sin\theta&0\\[2pt]
\sin\theta&\cos\theta&0\\[2pt]
0&0&1
\end{pmatrix},
\label{eq:planar_Lambda_Q}
\end{align}
\end{linenomath}
and a direct calculation gives
\begin{linenomath}
\begin{align}
\dot{\boldsymbol\Lambda}_{\rm p}
+
\boldsymbol\Omega\boldsymbol\Lambda_{\rm p}
-
\boldsymbol\Lambda_{\rm p}\boldsymbol\Omega
=
\begin{pmatrix}
\dot p&2p\dot\theta&0\\[3pt]
2p\dot\theta&-\dot p&0\\[3pt]
0&0&0
\end{pmatrix}.
\label{eq:planar_evolution_lhs}
\end{align}
\end{linenomath}
For the evolution to remain confined to the selected planar subspace, the driving strain must have a deviatoric part with vanishing $13$, $23$, and $33$ components. We therefore restrict the infinitesimal strain tensor accordingly and obtain
\begin{linenomath}
\begin{align}
\mathrm{dev}(\boldsymbol\varepsilon)
=
\begin{pmatrix}
\dfrac{\varepsilon_{11}-\varepsilon_{22}}{2}
&\varepsilon_{12}&0\\[3pt]
\varepsilon_{12}
&-\dfrac{\varepsilon_{11}-\varepsilon_{22}}{2}&0\\[3pt]
0&0&0
\end{pmatrix},
 \qquad \boldsymbol\varepsilon
=
\begin{pmatrix}
\varepsilon_{11}&\varepsilon_{12}&0\\[3pt]
\varepsilon_{12}&\varepsilon_{22}&0\\[3pt]
0&0&\dfrac{\varepsilon_{11}+\varepsilon_{22}}{2}
\end{pmatrix}.
\label{eq:planar_deviatoric_strain_components}
\end{align}
\end{linenomath}
Thus, the prescribed form of the three-dimensional strain has exactly the planar, symmetric, and traceless driving part required by the reduced remodelling law. The selected subspace is consequently invariant under the evolution.

\begin{remark}[Kinematic realisability of the planar restriction]
The prescribed strain field can be generated on the material plane
$X_3=0$ by a three-dimensional displacement field defined in a
neighbourhood of that plane. In particular, consider
\begin{linenomath}
\begin{align}
\boldsymbol u(X_1,X_2,X_3)
=
\begin{pmatrix}
u_1(X_1,X_2)
\\[1mm]
u_2(X_1,X_2)
\\[1mm]
\dfrac{\varepsilon_{11}(X_1,X_2)
+\varepsilon_{22}(X_1,X_2)}{2}X_3
\end{pmatrix},
\label{eq:planar_displacement_extension}
\end{align}
\end{linenomath}
where the in-plane displacement components $u_1$ and $u_2$ generate
$\varepsilon_{11}$, $\varepsilon_{22}$ and $\varepsilon_{12}$. On the
plane $X_3=0$, one obtains Equation \eqref{eq:planar_deviatoric_strain_components}. This construction is valid provided that the in-plane strain components
satisfy the corresponding compatibility condition. If
$\varepsilon_{11}+\varepsilon_{22}$ varies with $X_1$ or $X_2$,
transverse shear components may arise away from $X_3=0$. The present
formulation is therefore equivalent to an intrinsically two-dimensional
problem on the selected material plane and for the reduced remodelling
dynamics, but not necessarily for a three-dimensional body of
finite thickness. 
\end{remark}

\noindent
The reduction of the linearised remodelling law in
Equation~\eqref{eq_final_linearized_evolution} then yields the following
two linearly independent scalar equations
\begin{linenomath}
\begin{subequations}
\begin{align}
\dot p
&=
\frac{\mu}{b}
\left[
\frac{\varepsilon_{11}-\varepsilon_{22}}{2}\cos(2\theta)
+\varepsilon_{12}\sin(2\theta)
-p
\right],
\label{eq:planar_p_implicit}
\\[6pt]
2p\dot\theta
&=
\frac{\mu}{b}
\left[
-\frac{\varepsilon_{11}-\varepsilon_{22}}{2}\sin(2\theta)
+\varepsilon_{12}\cos(2\theta)
\right].
\label{eq:planar_theta_implicit}
\end{align}
\end{subequations}
\end{linenomath}
The evolution equations are supplemented by the initial conditions
\begin{linenomath}
\begin{align}
p(X,t_{\rm in})
=
p^{\rm (in)}(X),
\qquad
\theta(X,t_{\rm in})
=
\theta^{\rm (in)}(X),
\qquad
X\in\mathscr B.
\label{eq:planar_remodelling_initial_conditions}
\end{align}
\end{linenomath}
If $p^{\rm (in)}(X)=0$ at a material point, the two principal remodelling
stretches coincide and no preferred principal direction exists at that
point. Consequently, $\theta^{\rm (in)}(X)$ is not physically identifiable, and
the spectral evolution equations must be interpreted locally on the
subset where $p\neq0$. Before proceeding, we remark that the deformation boundary-value problem
defined by Equations~\eqref{eq_balance_linear_mom},
\eqref{BVP_tractions} and\eqref{BVP_Dirichlet} must be reduced
accordingly to the present planar setting.

\paragraph{State-space representation.}
We introduce the state array and its time derivative as
\begin{linenomath}
\begin{align}
z
&:=
\begin{bmatrix}
p &
\theta
\end{bmatrix}^{\rm T},
&
\dot{\boldsymbol z}
&:=
\begin{bmatrix}
\dot p &
\dot\theta
\end{bmatrix}^{\rm T}.
\label{eq:planar_state_array}
\end{align}
\end{linenomath}
Equations~\eqref{eq:planar_p_implicit} and
\eqref{eq:planar_theta_implicit} can be written as
\begin{linenomath}
\begin{align}
\boldsymbol A(z)
\dot{\boldsymbol z}
=
\boldsymbol b(z;\varepsilon_{\rm dev}),
\label{eq:planar_state_space_equation}
\end{align}
\end{linenomath}
where
\begin{linenomath}
\begin{align}
\boldsymbol A(z)
=
\begin{pmatrix}
1&0\\[3pt]
0&2p
\end{pmatrix} \quad \mbox{ and } \quad 
\boldsymbol b(z;\varepsilon_{\rm dev})
=
\frac{\mu}{b}
\begin{pmatrix}
\dfrac{\varepsilon_{11}-\varepsilon_{22}}{2}\cos(2\theta)
+\varepsilon_{12}\sin(2\theta)-p
\\[4mm]
-\dfrac{\varepsilon_{11}-\varepsilon_{22}}{2}\sin(2\theta)
+\varepsilon_{12}\cos(2\theta)
\end{pmatrix}.
\label{eq:planar_state_vector}
\end{align}
\end{linenomath}
This convention retains the factor $\mu/b$ on the right-hand side, in
agreement with the three-dimensional state-space representation. The
determinant of $\boldsymbol A$ is
\begin{linenomath}
\begin{align}
\det\boldsymbol A
=
2p.
\label{eq:planar_state_matrix_determinant}
\end{align}
\end{linenomath}
Therefore, the spectral state-space representation is locally invertible
if and only if $p\neq0$. On this regular subset, $\boldsymbol A^{-1}(z)=\mathrm{diag}\{1,1/2p\}$ and the evolution law is written as $\dot{\boldsymbol{z}}=\boldsymbol{f}_{0}(z,\varepsilon_{\rm dev})$, where $\boldsymbol{f}_{0}$ is a two-dimensional drift term having expression
\begin{linenomath}
\begin{align}
\boldsymbol{f}_{0}(z;\varepsilon_{\rm dev})=\boldsymbol M^{-1}(z)\boldsymbol{b}(z,\varepsilon_{\rm dev})=\dfrac{\mu}{b}\begin{pmatrix}
\dfrac{\varepsilon_{11}-\varepsilon_{22}}{2}\cos(2\theta)
+\varepsilon_{12}\sin(2\theta)-p
\\
\\
-\dfrac{\varepsilon_{11}-\varepsilon_{22}}{4p}\sin(2\theta)
+\dfrac{\varepsilon_{12}}{2p}\cos(2\theta)
\end{pmatrix}.
\label{eq_drift_2D}
\end{align}
\end{linenomath}
The loss of invertibility at $p=0$ reflects the degeneracy of the
spectral coordinates, with the two principal remodelling
stretches coinciding and no preferred principal direction can be
identified. Thus, the singularity of $\boldsymbol A^{-1}$ does not indicate a singularity of the underlying tensorial
remodelling process. Finally, the spectral coordinates are not globally unique. In particular,
\begin{linenomath}
\begin{align}
(p,\theta)
&\sim
(p,\theta+\pi),
&
(p,\theta)
&\sim
\left(-p,\theta+\frac{\pi}{2}\right),
\label{eq:planar_spectral_equivalence}
\end{align}
\end{linenomath}
since these pairs reconstruct the same symmetric deviatoric remodelling
tensor. A unique representation therefore requires an additional
convention, such as $p\geq0$ together with $\theta\in[0,\pi)$.

\paragraph{Equilibrium points.} Assuming the strain components to be constant in time and by applying Theorem \eqref{thm_equilibrium}, we obtain that the equilibrium configuration of the considered problem is unique and given by
\begin{linenomath}
\begin{align}
\bar{p}_{\rm eq}=\sqrt{\left(\dfrac{\varepsilon_{11}-\varepsilon_{22}}{2}\right)^{2}+(\varepsilon_{12})^2} \quad \mbox{ and } \quad \bar{\theta}_{\rm eq}=\dfrac{1}{2}\arctan\left(\dfrac{2\varepsilon_{12}}{\varepsilon_{11}-\varepsilon_{22}}\right) \, \mathrm{mod}\,\pi.
\label{eq_equilibria_2D}
\end{align}
\end{linenomath}
If $p_{\rm eq}^{(0)}>0$, this equilibrium is unique modulo the spectral equivalences in Equation~\eqref{eq:planar_spectral_equivalence}. If $p_{\rm eq}^{(0)}=0$, the equilibrium remodelling tensor is isotropic and the angular coordinate is not identifiable. In the following, the controllability analysis is therefore restricted to the regular case $p_{\rm eq}^{(0)}>0$. Moreover, the equilibrium reported in Equation \eqref{eq_equilibria_2D} is a zero of the drift term in Equation \eqref{eq_drift_2D} and, hence, is an equilibrium of the non-controlled remodelling law. 

\paragraph{Non-Controllability through $p$.} We now consider the controllability of the remodelling law in Equations \eqref{eq:planar_state_space_equation} and \eqref{eq:planar_state_vector} by introducing a control function $w_{1}$ which directly drives the evolution of $p$. Hence, we write Equation \eqref{eq:planar_state_space_equation} in the controlled form 
\begin{linenomath}
\begin{align}
\boldsymbol M(z)
\dot{\boldsymbol z}
=
\boldsymbol b(z;\varepsilon_{\rm dev})+w_{1}\dfrac{\mu}{b}\begin{bmatrix}
1 \\ 0
\end{bmatrix},
\label{eq:planar_state_space_equationb}
\end{align}
\end{linenomath}
and, accordingly, its explicit version (compare with Equation \eqref{eq_drift_2D}) is
\begin{linenomath}
\begin{align}
\dot{\boldsymbol{z}}=\boldsymbol{f}_{0}(z;\varepsilon_{\rm dev})+w_{1}\boldsymbol{f}_{1}(z), \qquad \boldsymbol{f}_{1}(z)=\dfrac{\mu}{b}\boldsymbol{A}^{-1}(z)\begin{bmatrix}
1 \\0
\end{bmatrix}=\dfrac{\mu}{b}\begin{bmatrix}
1 \\0
\end{bmatrix}.
\end{align}
\end{linenomath}
We then compute the Lie bracket
\begin{linenomath}
\begin{align}
[\boldsymbol{f}_{0},\boldsymbol{f}_{1}](\bar{z}_{\rm eq})=\dfrac{\mu^2}{b^2}\begin{bmatrix}
1\\ 0
\end{bmatrix}.
\end{align}
\end{linenomath}
which is clearly a multiple of $\boldsymbol{f}_1$. Therefore the Lie algebra generated by $\boldsymbol f_0$ and $\boldsymbol f_1$ is one-dimensional and cannot span $\mathbb R^2$. Consequently, the Lie Algebra Rank Condition is violated at the equilibrium point. By Proposition~\ref{propositionLARC}, the system cannot be small-time locally controllable.

\paragraph{Non-Controllability through $\theta$.} We now consider a control function $w_{2}$ which directly steers the remodelling law through the evolution of $\theta$. This way, we write Equation \eqref{eq:planar_state_space_equation} as
\begin{linenomath}
\begin{align}
\boldsymbol M(z)
\dot{\boldsymbol z}
=
\boldsymbol b(z;\varepsilon_{\rm dev})+w_{2}\dfrac{\mu}{b}\begin{bmatrix}
0 \\ 1
\end{bmatrix},
\label{eq:planar_state_space_equationc}
\end{align}
\end{linenomath}
and, accordingly, its explicit version (compare with Equation \eqref{eq_drift_2D}) is
\begin{linenomath}
\begin{align}
\dot{\boldsymbol{z}}=\boldsymbol{f}_{0}(z;\varepsilon_{\rm dev})+w_{2}\boldsymbol{f}_{2}(z), \qquad \boldsymbol{f}_{2}(z)=\dfrac{\mu}{b}\boldsymbol M^{-1}(z)\begin{bmatrix}
0 \\1
\end{bmatrix}=\dfrac{\mu}{2bp}\begin{bmatrix}
0 \\1
\end{bmatrix}.
\end{align}
\end{linenomath}
By computing the Lie bracket
\begin{linenomath}
\begin{align}
[\boldsymbol{f}_{0},\boldsymbol{f}_{2}](\bar{z}_{\rm eq})=\dfrac{\mu^2}{b^2}\begin{bmatrix}
0 \\ 1
\end{bmatrix},
\end{align}
\end{linenomath}
which is a scalar multiple of $\boldsymbol f_2(\bar z_{\rm eq})$. Hence, the Lie algebra generated by $\boldsymbol f_0$ and $\boldsymbol f_2$ remains one-dimensional and the LARC condition cannot be satisfied. By Proposition~\ref{propositionLARC}, the system is not small-time locally controllable at the considered equilibrium.

\medskip
\noindent
The following theorem summarizes the main results.
\begin{theorem}[Planar non-controllability through stretch or angular actuation] For every infinitesimal strain tensor belonging to the planar class defined by Equation~\eqref{eq:planar_deviatoric_strain_components}, neither the control system \eqref{eq:planar_state_space_equationb}, obtained by actuation of the stretch variable $p$, nor the control system \eqref{eq:planar_state_space_equationc}, obtained by actuation of the angular variable $\theta$, is small-time locally controllable at the equilibrium configuration \eqref{eq_equilibria_2D}. \end{theorem}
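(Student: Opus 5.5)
The plan is to treat the two actuation modes separately, working in the regular chart $p>0$ near $\bar z_{\rm eq}$, where all vector fields are analytic. I write $\rho:=\bar p_{\rm eq}>0$ and $\varphi:=\theta-\bar\theta_{\rm eq}$. By the definition of $\bar p_{\rm eq},\bar\theta_{\rm eq}$ in Equation~\eqref{eq_equilibria_2D}, the drift \eqref{eq_drift_2D} becomes
\begin{align*}
f_{0,p}=\frac{\mu}{b}\left(\rho\cos 2\varphi-p\right),
\qquad
f_{0,\theta}=-\frac{\mu\rho}{2bp}\sin 2\varphi .
\end{align*}
This rewriting is the first step, and it makes the structure of both cases transparent. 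The two cases need different arguments, because the single-bracket computations given just before the statement do not by themselves control the whole Lie algebra.

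\emph{Stretch actuation.} Here $f_1=\frac{\mu}{b}\boldsymbol e_1$ has no $\theta$-component. Moreover $f_{0,\theta}$ vanishes identically on the line $\{\varphi=0\}$, for every $p>0$. Hence both $f_0$ and $f_1$ are tangent to the one-dimensional submanifold $\{\theta=\bar\theta_{\rm eq}\}$. Brackets of vector fields tangent to a submanifold remain tangent to it, so every element of $\mathrm{Lie}(f_0,f_1)$ evaluated at $\bar z_{\rm eq}$ lies in $\mathrm{span}\{\boldsymbol e_1\}$. Thus $\mathcal A(\bar z_{\rm eq})\neq\mathbb R^2$, the LARC fails, and Proposition~\ref{propositionLARC} rules out STLC. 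There is also a direct check: uniqueness for the ODE shows that $\theta(t)\equiv\bar\theta_{\rm eq}$ for every control when starting on this line. Consequently no target with $\theta\neq\bar\theta_{\rm eq}$ is reachable from $\bar z_{\rm eq}$.

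\emph{Angular actuation.} This is the step I expect to be the main obstacle, because the LARC route appears to fail. The field $f_2=\frac{\mu}{2bp}\boldsymbol e_2$ gives $[f_0,f_2](\bar z_{\rm eq})\parallel\boldsymbol e_2$, as stated in the paper. However, a second bracket $[f_2,[f_0,f_2]]$ brings in $\partial^2_\theta f_{0,p}=-\frac{4\mu\rho}{b}\cos2\varphi\neq0$ at $\varphi=0$. So I expect its $p$-component to be nonzero, in which case the LARC would actually hold. I would therefore abandon Proposition~\ref{propositionLARC} for this case and use a one-sided comparison argument instead. Since $w_2$ does not enter the $p$-equation,
\begin{align*}
\dot p=\frac{\mu}{b}\left(\rho\cos 2\varphi-p\right)\le \frac{\mu}{b}(\rho-p)
\end{align*}
for every control. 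By Gr\"onwall/comparison, $p(t)\le \rho+(p(t_{\rm in})-\rho)e^{-\mu(t-t_{\rm in})/b}$. Taking $z_{\rm in}=\bar z_{\rm eq}$ gives $p(t)\le\rho$ for all $t$. Hence any $z_{\rm fin}=(\rho+\delta,\bar\theta_{\rm eq})$ with $\delta>0$ is unreachable, however small the ball $B_r(\bar z_{\rm eq})$ and whatever the horizon $\rho$ in Definition~\ref{STLC_def}. This contradicts STLC.

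Finally, I would remark that both conclusions are invariant under the spectral equivalences \eqref{eq:planar_spectral_equivalence}. They hold for every nondegenerate planar strain, meaning $\rho>0$, which is exactly the class in the statement. This would complete the theorem, with the $\theta$-case resting on the comparison argument rather than on the LARC.
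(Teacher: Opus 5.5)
Your proof is correct, and it differs from the paper's in a way that matters. The paper handles both actuations in one step. It notes that $\boldsymbol f_i(\bar z_{\rm eq})$ is an eigenvector of $\nabla_z\boldsymbol f_0(\bar z_{\rm eq})$; in your variables this Jacobian is $-\tfrac{\mu}{b}\boldsymbol I$, so every vector is one. It then infers that all iterated brackets stay parallel to $\boldsymbol f_i$, so the LARC fails and Proposition~\ref{propositionLARC} applies. That observation only controls the brackets $\mathrm{ad}^j_{\boldsymbol f_0}\boldsymbol f_i$ at the equilibrium. In other words, it shows that the linearised system fails the Kalman test, which does not by itself exclude STLC. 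Your suspicion about the angular case is right. With the paper's convention $[\boldsymbol X,\boldsymbol Y]=D\boldsymbol Y\,\boldsymbol X-D\boldsymbol X\,\boldsymbol Y$ and $\bar p:=\bar p_{\rm eq}$, one gets $[\boldsymbol f_0,\boldsymbol f_2]=\bigl(\tfrac{\mu^2\bar p}{b^2p}\sin2\varphi,\ \tfrac{\mu^2}{2b^2p}\bigr)^{\rm T}$ and $[\boldsymbol f_2,[\boldsymbol f_0,\boldsymbol f_2]](\bar z_{\rm eq})=\bigl(\tfrac{\mu^3}{b^3\bar p},\,0\bigr)^{\rm T}$. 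So the LARC actually holds under $\theta$-actuation, and the paper's argument does not prove that half of the statement. Your two replacements are what establish the theorem. Tangency of $\boldsymbol f_0$ and $\boldsymbol f_1$ to the invariant line $\{\theta=\bar\theta_{\rm eq}\}$ properly justifies that all of $\mathcal A(\bar z_{\rm eq})$ lies in $\mathrm{span}\{\boldsymbol e_1\}$. The one-sided bound $p\le\bar p$ settles the angular case with no Lie-algebraic input at all. The paper's route is shorter, but it is really only a statement about the linearisation. Yours costs a little more and gives a stronger conclusion: from $\bar z_{\rm eq}$, no state with $p>\bar p$ is reachable for any horizon or control amplitude. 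This has a coordinate-free reading that also supports your closing remark on the spectral equivalences. The $\theta$-control enters the tensorial law along $\boldsymbol Q_{\rm p}(\boldsymbol e_1\otimes\boldsymbol e_2+\boldsymbol e_2\otimes\boldsymbol e_1)\boldsymbol Q_{\rm p}^{\rm T}$, which is orthogonal to $\boldsymbol Q_{\rm p}\boldsymbol\Lambda_{\rm p}\boldsymbol Q_{\rm p}^{\rm T}$. It therefore cannot change the norm of the remodelling tensor, so $|p|\le\bar p$ along every trajectory issued from the equilibrium. Two cosmetic points. First, you use $\rho$ both for $\bar p_{\rm eq}$ and for the STLC horizon. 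Second, your rewriting of the drift tacitly takes the branch $\cos2\bar\theta_{\rm eq}=(\varepsilon_{11}-\varepsilon_{22})/(2\bar p_{\rm eq})$, $\sin2\bar\theta_{\rm eq}=\varepsilon_{12}/\bar p_{\rm eq}$. That is the correct choice; the $\arctan$ formula in \eqref{eq_equilibria_2D} is off by $\pi/2$ when $\varepsilon_{11}<\varepsilon_{22}$.
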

\begin{proof}
    In both cases, the controlled vector field $\boldsymbol f_i$, $i=1,2$, is an eigenvector of the Jacobian $\nabla_{z} \boldsymbol f_0(\bar z_{\rm eq})$. Consequently, the Lie bracket $[\boldsymbol{f}_{0},\boldsymbol{f}_{i}](\bar{z}_{\rm eq})$ remains proportional to $\boldsymbol f_i$, and the same property holds for all higher-order iterated brackets. Therefore, the Lie algebra generated by $\boldsymbol f_0$ and $\boldsymbol f_i$ cannot span the whole $\mathbb R^2$ and remains one-dimensional. The Lie Algebra Rank Condition is thus violated at the equilibrium point and, by Proposition~\ref{propositionLARC}, the corresponding control system cannot be small-time locally controllable.
\end{proof}
\noindent
The obtained non-controllability result admits a mechanical interpretation. A perturbation acting solely on the remodelling stretch modifies the intensity of the remodelling process without generating an independent mechanism capable of controlling the orientation of the principal remodelling directions. Conversely, a perturbation acting only on the orientation affects the remodelling frame but does not provide sufficient authority to independently regulate the associated remodelling stretches. Although the two variables influence each other through the drift dynamics, this interaction remains constrained and does not produce the additional degrees of freedom required for local controllability. The planar remodelling problem therefore exhibits an intrinsic limitation of single-input actuation. In contrast with the considered three-dimensional under-actuated systems, the nonlinear coupling between stretch and orientation is not sufficient to recover the missing control direction, and controllability is lost.

\section{Conclusions}
\label{sec_concl}
In this work, we formulated a control-theoretic framework for continuum remodelling and investigated its controllability properties within the perspective of \emph{Continuum Control Theory}. By combining the Bilby--Kröner--Lee decomposition with a spectral parametrisation of the remodelling tensor, the remodelling dynamics was reformulated as a finite-dimensional nonlinear control system parametrised by the material position. This allowed the methods of Geometric Control Theory to be applied directly to the evolution of the internal material structure. The analysis was restricted to isochoric remodelling in the infinitesimal-strain regime while retaining finite rotations of the principal remodelling directions. Under these assumptions, the remodelling tensor was represented through its principal stretches and orientations, leading to a nonlinear state-space description in which the coupling between these quantities remains fully preserved. 

For selected three-dimensional strain classes, the resulting control systems were shown to be small-time locally controllable at the equilibrium point, despite being under-actuated. In these cases, the nonlinear coupling among remodelling stretches and orientations generates sufficient accessible directions in the state space to compensate for the absence of direct actuation of all state variables. Consequently, the complete remodelling state can be steered through a number of controls smaller than the dimension of the state space. 

A markedly different behaviour was found in the planar reduction. Although the reduced system retains a nonlinear coupling between remodelling stretch and orientation, controllability is lost when only one control input is available. The Lie algebra generated by the drift and the controlled direction remains one-dimensional and the Lie Algebra Rank Condition is not satisfied. As a consequence, neither stretch actuation nor angular actuation alone is sufficient to achieve small-time local controllability. This result highlights an important conceptual point. The loss of controllability is not caused by the absence of nonlinear coupling, since such coupling is still present in the planar model. Rather, it originates from the inability of the nonlinear dynamics to generate new independent directions in the state space. From a geometric viewpoint, nonlinear interactions are therefore not sufficient by themselves to guarantee controllability: what matters is their capacity to enrich the accessibility distribution of the system. From a mechanical perspective, the obtained results suggest that the controllability of remodelling processes depends not only on the number of available control actions but also on the geometric structure of the coupling between remodelling stretches and orientations. While the three-dimensional problem possesses sufficient internal structure to recover the missing control directions, the planar reduction does not. This observation suggests that dimensional reduction may fundamentally alter controllability properties and should therefore be treated with caution when designing control strategies for remodelling systems. 

Future developments will address the extension of the proposed framework to fully coupled deformation-remodelling dynamics, to time-dependent strain fields, and to continuum formulations involving partial differential equations. A further direction concerns the physical interpretation of the reconstructed controls and their relation to experimentally accessible external stimuli, with the aim of establishing a closer connection between constitutive modelling, continuum mechanics and control theory.

\enlargethispage{20pt}

\textbf{Acknowledgments:} {S.D.S. acknowledges the PNRR MUR project ``\textit{National Quantum Science and Technology Institute} (codice progetto PE00000023)''. S.D.S. and M.Z. are supported by the Italian National Group of Mathematical Physics (GNFM, INdAM).\vspace{-1.5cm}}


\vskip2pc



\vskip2pc

\bibliographystyle{RS}

\bibliography{bibliography}

@article {MMSZ,
    AUTHOR = {R. Marchello and  Morandotti and H. Shum and M.
              Zoppello},
     TITLE = {The {$N$}-link swimmer in three dimensions: controllability
              and optimality results},
   JOURNAL = {Acta Appl. Math.},
    VOLUME = {178},
      YEAR = {2022},
     PAGES = {Paper No. 6, 25},
      ISSN = {0167-8019,1572-9036},
       DOI = {10.1007/s10440-022-00480-3},
}

@Book{Micunovic2009a,
  author    = {M. Micunovic},
  publisher = {Springer New York},
  title     = {Thermomechanics of {V}iscoplasticity: {F}undamentals and {A}pplications},
  year      = {2009},
  isbn      = {9780387894904},
  doi       = {10.1007/978-0-387-89490-4},
  issn      = {1876-9896},
  journal   = {Advances in Mechanics and Mathematics},
}

@book{Coron2007,
author = "J.-M. Coron",
title = "Control and nonlinearity",
series="Mathematical Surveys and Monographs",
publisher = "AMS",
address = "Providence, RI",
year = "2007",
}

@book{bloch,
  title={Nonholonomic Mechanics and Control},
  author={Bloch, A.M.},
  publisher={Springer},
  year={2015},
  address={New York, NY}
}

@article{Grillo2023MEMOCSa,
  title={Comparison between different viewpoints on bulk growth mechanics},
  author={A. Grillo and S. {Di Stefano}},
  journal={Mathematics and Mechanics of Complex Systems},
  volume={11},
  number={2},
  pages={287--311},
  year={2023},
  publisher={Mathematical Sciences Publishers}
}

@article{Grillo2023MEMOCSb,
  title={An a posteriori approach to the mechanics of volumetric growth},
  author={A. Grillo and S. {Di Stefano}},
  journal={Mathematics and Mechanics of Complex Systems},
  volume={11},
  number={1},
  pages={57--86},
  year={2023},
  publisher={Mathematical Sciences Publishers}
}

@article{GrilloMMS2023a,
  title={A formulation of volumetric growth as a mechanical problem subjected to non-holonomic and rheonomic constraint},
  author={A. Grillo and S. {Di Stefano}},
  journal={Mathematics and Mechanics of Solids},
  volume={28},
  number={10},
  pages={2215--2241},
  year={2023},
  publisher={SAGE Publications Sage UK: London, England}
}

@book{LibroJurdjevic,
author = "V. Jurdjevic",
title = "Geometric Control Theory", 
publisher = "Cambridge University Press",
address = "Cambridge",
year = "1997",
}

@article{DiStefano2025a,
  title={A priori and a posteriori controllability of volumetric growth: a theoretical approach},
  author={S. {Di Stefano} and M. Zoppello},
  journal={Zeitschrift f{\"u}r angewandte Mathematik und Physik},
  volume={76},
  number={6},
  pages={1--27},
  year={2025},
  publisher={Springer}
}

@article{dicarlo2002a,
  title={Growth and balance},
  author={A. DiCarlo and S. Quiligotti},
  journal={Mechanics Research Communications},
  volume={29},
  number={6},
  pages={449--456},
  year={2002},
  publisher={Elsevier}
}

@article{CardinGianniottiSpiro2021,
  title={Control problems with differential constraints of higher order},
  author={Cardin, Franco and Giannotti, Cristina and Spiro, Andrea},
  journal={Nonlinear Analysis},
  volume={207},
  pages={112263},
  year={2021},
  publisher={Elsevier}
}

@article{DiStefano2026a,
  title={Optimal controllability in space of continuum adhesive systems: an a posteriori approach},
  author={Di Stefano, Salvatore and Giammarini, Alessandro and Zoppello, Marta},
  journal={Zeitschrift f{\"u}r angewandte Mathematik und Physik},
  volume={77},
  number={7},
  pages={186},
  year={2026},
  publisher={Springer}
}

@article{DiStefano2022c,
  title={An elasto-plastic biphasic model of the compression of multicellular aggregates: the influence of fluid on stress and deformation: S. Di Stefano et al.},
  author={Di Stefano, Salvatore and Giammarini, Alessandro and Giverso, Chiara and Grillo, Alfio},
  journal={Zeitschrift f{\"u}r angewandte Mathematik und Physik},
  volume={73},
  number={2},
  pages={79},
  year={2022},
  publisher={Springer}
}

@article{Cleja2000a,
  title={Eshelby's stress tensors in finite elastoplasticity},
  author={Cleja-Tigoiu, S and Maugin, GA},
  journal={Acta Mechanica},
  volume={139},
  number={1},
  pages={231--249},
  year={2000},
  publisher={Springer}
}

@article{SansonettoZoppello2020,
  author={Sansonetto, Nicola and Zoppello, Marta},
  journal={IEEE Control Systems Letters}, 
  title={On the Trajectory Generation of the Hydrodynamic Chaplygin Sleigh}, 
  year={2020},
  volume={4},
  number={4},
  pages={922-927},
  doi={10.1109/LCSYS.2020.2996763}}

@article{FPZ,
  author    = {Fass{\`o}, Francesco and Passarella, Simone and Zoppello, Marta},
  title     = {Control of locomotion systems and dynamics in relative periodic orbits},
  journal   = {Journal of Geometric Mechanics},
  volume    = {12},
  number    = {3},
  pages     = {395--420},
  year      = {2020},
  doi       = {10.3934/jgm.2020022},
  url       = {https://www.aimsciences.org/article/doi/10.3934/jgm.2020022}
}

@article{BlochObstacleAvoidance,
  title={Dynamic interpolation for obstacle avoidance on Riemannian manifolds},
  author={Bloch, Anthony and Camarinha, Margarida and Colombo, Leonardo},
  journal={International Journal of Control},
  volume={93},
  number={10},
  pages={2354--2367},
  year={2019},
  publisher={Taylor \& Francis},
  doi={10.1080/00207179.2019.1603400}
}

@article{DMPSansonetto,
  title={Dynamic Movement Primitives: Volumetric Obstacle Avoidance},
  author={Michele Ginesi and Daniele Meli and Andrea Calanca and Diego Dall’Alba and Nicola Sansonetto and Paolo Fiorini},
  journal={International Journal of Control},
  volume={93},
  number={10},
  pages={2354--2367},
  year={2019},
  publisher={Taylor \& Francis},
  doi={10.1080/00207179.2019.1603400}
}

@article{Wang2020OptimalCO,
  title={Optimal Control of a Soft CyberOctopus Arm},
  author={Tixian Wang and Udit Halder and Heng-Sheng Chang and Mattia Gazzola and Prashant G. Mehta},
  journal={2021 American Control Conference (ACC)},
  year={2020},
  pages={4757-4764},
  url={https://api.semanticscholar.org/CorpusID:222134131}
}

@book {AgrachevBook,
    AUTHOR = {Agrachev, Andrei A. and Sachkov, Yuri L.},
     TITLE = {Control theory from the geometric viewpoint},
    SERIES = {Encyclopaedia of Mathematical Sciences},
    VOLUME = {87},
      NOTE = {Control Theory and Optimization, II},
 PUBLISHER = {Springer-Verlag, Berlin},
      YEAR = {2004},
     PAGES = {xiv+412},
      ISBN = {3-540-21019-9},
   MRCLASS = {93-02 (49-02 49K15 93B27 93C15)},
  MRNUMBER = {2062547},
MRREVIEWER = {Kevin\ A.\ Grasse},
       DOI = {10.1007/978-3-662-06404-7},
       URL = {https://doi-org.ezproxy.biblio.polito.it/10.1007/978-3-662-06404-7},
}

@book{Lubliner2008a,
  title={Plasticity theory},
  author={Lubliner, Jacob},
  year={2008},
  publisher={Courier Corporation}
}

@article{Cermelli2001a,
  title={Configurational stress, yield and flow in rate-independent plasticity},
  author={Cermelli, Paolo and Fried, Eliot and Sellers, Shaun},
  journal={Proceedings: Mathematics, Physical and Engineering Sciences},
  pages={1447--1467},
  year={2001},
  publisher={JSTOR}
}

@article{Giverso2012a,
  title={Modelling the compression and reorganization of cell aggregates},
  author={Giverso, Chiara and Preziosi, Luigi},
  journal={Mathematical medicine and biology: a journal of the IMA},
  volume={29},
  number={2},
  pages={181--204},
  year={2012},
  publisher={OUP}
}

@article{Ambrosi2019a,
  title={Growth and remodelling of living tissues: perspectives, challenges and opportunities},
  author={Ambrosi, Davide and Amar, Martine Ben and Cyron, Christian J and DeSimone, Antonio and Goriely, Alain and Humphrey, Jay D and Kuhl, Ellen},
  journal={Journal of the Royal Society Interface},
  volume={16},
  number={157},
  pages={20190233},
  year={2019}
}

@article{DiStefano2026b,
  title={Multi-scale mechanics and remodelling of focal adhesions and the extracellular matrix},
  author={Di Stefano, Salvatore and Florio, Giuseppe and Puglisi, Giuseppe and Fazio, Vincenzo and Penta, Raimondo and Ram{\'\i}rez-Torres, Ariel},
  journal={Proceedings of the Royal Society of London Series A: Mathematical, Physical and Engineering Sciences},
  volume={482},
  number={2332},
  year={2026},
  publisher={The Royal Society}
}

@article{Sadik2017a,
  title={On the origins of the idea of the multiplicative decomposition of the deformation gradient},
  author={Sadik, S. and Yavari, A.},
  journal={Math. Mech. Solids},
  volume={22},
  number={4},
  pages={771--772},
  year={2017}
}

@article{Garikipati2006a,
  title={Biological remodelling: stationary energy, configurational change, internal variables and dissipation},
  author={Garikipati, K and Olberding, JE and Narayanan, H and Arruda, EM and Grosh, K and Calve, S},
  journal={Journal of the Mechanics and Physics of Solids},
  volume={54},
  number={7},
  pages={1493--1515},
  year={2006},
  publisher={Elsevier}
}

@book{Epstein2007a,
  title={Material inhomogeneities and their evolution: A geometric approach},
  author={Epstein, Marcelo and Elzanowski, Marek},
  year={2007},
  publisher={Springer Science \& Business Media}
}

@article{Maugin1998a,
  title={Geometrical material structure of elastoplasticity},
  author={Maugin, GA and Epstein, M},
  journal={International Journal of Plasticity},
  volume={14},
  number={1-3},
  pages={109--115},
  year={1998},
  publisher={Elsevier}
}

@article{Epstein2015a,
  title={Mathematical characterization and identification of remodeling, growth, aging and morphogenesis},
  author={Epstein, Marcelo},
  journal={Journal of the Mechanics and Physics of Solids},
  volume={84},
  pages={72--84},
  year={2015},
  publisher={Elsevier}
}

@article{Rodriguez1994a,
  title={Stress-dependent finite growth in soft elastic tissues},
  author={Rodriguez, Edward K and Hoger, Anne and McCulloch, Andrew D},
  journal={Journal of biomechanics},
  volume={27},
  number={4},
  pages={455--467},
  year={1994},
  publisher={Elsevier}
}

@article{Ambrosi2011a,
  title={Perspectives on biological growth and remodeling},
  author={Ambrosi, Davide and Ateshian, Gerard A and Arruda, Ellen M and Cowin, SC and Dumais, J and Goriely, A and Holzapfel, Gerhard A and Humphrey, Jay D and Kemkemer, R and Kuhl, Ellen and others},
  journal={Journal of the Mechanics and Physics of Solids},
  volume={59},
  number={4},
  pages={863--883},
  year={2011},
  publisher={Elsevier}
}

@article{Latorre2018a,
  title={Critical roles of time-scales in soft tissue growth and remodeling},
  author={Latorre, Marcos and Humphrey, Jay D},
  journal={APL bioengineering},
  volume={2},
  number={2},
  year={2018},
  publisher={AIP Publishing}
}

@article{Latorre2020a,
  title={Modeling biological growth and remodeling: contrasting methods, contrasting needs},
  author={Latorre, Marcos},
  journal={Current Opinion in Biomedical Engineering},
  volume={15},
  pages={26--31},
  year={2020},
  publisher={Elsevier}
}

@article{Ciancio2008a,
author = {Ciancio, V. and Dolfin, M. and Francaviglia, Mauro and Preston, Serge},
year = {2008},
month = {02},
pages = {},
title = {Uniform Materials and the Multiplicative Decomposition of the Deformation Gradient in Finite Elasto-Plasticity},
volume = {33},
journal = {Journal of Non-Equilibrium Thermodynamics},
doi = {10.1515/JNETDY.2008.009}
}

@incollection{Preston2010,
  title={Material uniformity and the concept of the stress space},
  author={Preston, Serge and El{\.z}anowski, Marek},
  booktitle={Continuous Media with Microstructure},
  pages={91--101},
  year={2010},
  publisher={Springer}
}

@article{Humphrey2014a,
  title={Mechanotransduction and extracellular matrix homeostasis},
  author={Humphrey, Jay D and Dufresne, Eric R and Schwartz, Martin A},
  journal={Nature reviews Molecular cell biology},
  volume={15},
  number={12},
  pages={802--812},
  year={2014},
  publisher={Nature Publishing Group UK London}
}

@article{Chen2019a,
  title={Electrical stimulation as a novel tool for regulating cell behavior in tissue engineering},
  author={Chen, Cen and Bai, Xue and Ding, Yahui and Lee, In-Seop},
  journal={Biomaterials research},
  volume={23},
  number={1},
  pages={25},
  year={2019},
  publisher={BioMed Central London}
}

@article{Qi2022a,
  title={Recent progress in active mechanical metamaterials and construction principles},
  author={Qi, Jixiang and Chen, Zihao and Jiang, Peng and Hu, Wenxia and Wang, Yonghuan and Zhao, Zeang and Cao, Xiaofei and Zhang, Shushan and Tao, Ran and Li, Ying and others},
  journal={Advanced Science},
  volume={9},
  number={1},
  pages={2102662},
  year={2022},
  publisher={Wiley Online Library}
}

@article{Liu2024a,
  title={Programmable mechanical metamaterials: basic concepts, types, construction strategies—a review},
  author={Liu, Chenyang and Zhang, Xi and Chang, Jiahui and Lyu, You and Zhao, Jianan and Qiu, Song},
  journal={Frontiers in Materials},
  volume={11},
  pages={1361408},
  year={2024},
  publisher={Frontiers Media SA}
}

@article{Ambrosi2025a,
  title={The shape of the mitral annulus: A hypothesis of mechanical morphogenesis},
  author={Ambrosi, Davide and Deorsola, Luca and Turzi, Stefano and Zoppello, Marta},
  journal={Mathematics and Mechanics of Solids},
  volume={30},
  number={2},
  pages={356--371},
  year={2025},
  publisher={SAGE Publications Sage UK: London, England}
}

\end{document}